\pgfplotsset{compat=1.18}
\newtheorem{theorem}{Theorem}[section]
\newtheorem*{theorem*}{Theorem}
\newtheorem{lemma}[theorem]{Lemma}
\newtheorem{corollary}[theorem]{Corollary}
\newtheorem{observation}[theorem]{Observation}
\newtheorem{definition}{Definition}[section]
\newcommand{\problemtitle}[1]{\gdef\@problemtitle{#1}}
\newcommand{\probleminput}[1]{\gdef\@probleminput{#1}}
\newcommand{\problemquestion}[1]{\gdef\@problemquestion{#1}}
  \par\addvspace{.5\baselineskip}
  \par\addvspace{.5\baselineskip}
\title{Inclusive and Exclusive Vertex Splitting into Specific Graph Classes: NP Hardness and Algorithms}
\author{
Ajinkya Gaikwad$^{*}$,
Hitendra Kumar$^{*}$,
S. Padmapriya$^{*}$,
Praneet Kumar Patra$^{*}$,
Harsh Sanklecha$^{*}$,
and Soumen Maity%
\thanks{Department of Mathematics, Indian Institute of Science Education and Research (IISER) Pune, India.\\
Emails:\texttt{\{ajinkya.gaikwad, sahu.hitendrakumar, s.padmapriya, praneet.kumar\_patra, harsh.sanklecha\} @students.iiserpune.ac.in}, 
\texttt{soumen@iiserpune.ac.in}}
}
\date{}
\begin{document}

\maketitle

\begin{abstract}
We consider a family of graph modification problems called the $\mathcal{F}$-{\sc Vertex Splitting}, where we are interested in knowing whether, given any graph $G$, it can be modified to a graph $G' \in \mathcal{F}$ through a sequence of at most $k$ vertex splits. Particularly, we consider the $\mathcal{F}$-{\sc Vertex Splitting} when the graph class $\mathcal{F}$ is a constellation, cycle graph, linear forest, and bipartite graphs. We consider two variants of vertex splitting proposed by Abu-Khzam et al. [ISCO 2018], namely \emph{inclusive vertex splitting} and \emph{exclusive vertex splitting}.  We show that the problem $\mathcal{F}$-{\sc Vertex Splitting} is polynomial-time solvable when $\mathcal{F}$ is a cycle graph or a linear forest, for both variants of vertex splitting. When $\mathcal{F}$ is a constellation or a bipartite graph, the $\mathcal{F}$-{\sc Vertex Splitting} is NP-complete for both variants of vertex splitting. 
\end{abstract}

\bigskip
\noindent\textbf{Keywords:} Inclusive vertex splitting, exclusive vertex splitting, graph algorithms, constellation, cycle graph, linear forest, bipartite graph

\section{Introduction}~\label{Sec: Introduction}
Graph modification problems have long been a central theme in both classical and parameterized complexity theory. These problems typically aim to transform a given graph into one satisfying certain desired structural properties by applying a limited number of allowed operations. Over the years, a wide range of graph operations-such as vertex deletion, edge contraction, and edge editing have been extensively studied in this context. Each of these operations provides unique insights into the structural behavior of graphs and serves as a foundation for developing efficient algorithms and complexity classifications.

In recent years, a relatively new and intriguing graph operation, known as vertex splitting, has attracted growing attention in the algorithmic and complexity theoretic community. Vertex splitting is a graph modification operation in which a vertex $v$ is removed and replaced by two new vertices $v_1$ and $v_2$, and each neighbor of $v$ is then made adjacent to either $v_1$, $v_2$, or both. Depending on how the adjacency is distributed between the copies, we distinguish between two fundamental variants of this operation: \textit{Inclusive Vertex Splitting} (The neighborhoods of \( v_1 \) and \( v_2 \) may overlap) and \textit{Exclusive Vertex Splitting} (The neighborhoods of \( v_1 \) and \( v_2 \) form a partition of \( N(v) \)).

For each fixed graph class $\mathcal{F}$, we define a distinct vertex splitting problem, denoted as \textsc{$\mathcal{F}$-Inclusive} (respectively \textsc{Exclusive}) \textsc{Vertex Splitting}, \text{ which asks whether a given graph } $G$ can be transformed into a graph satisfying $\mathcal{F}$ by performing at most k inclusive (respectively exclusive) vertex splits. While the inclusive variant allows greater flexibility and can simulate the exclusive one, the exclusive variant is often more restrictive and structurally cleaner. Understanding how these variants differ in their expressive and algorithmic power remains a central challenge in the theory of vertex splitting.

Vertex splitting was initially studied in the context of graph drawings and planarization~\cite{NollenburgEtal2025, FariaFigueiredoMendonca2001, Liebers2024}, where it was used as a tool to reduce edge crossings in dense graphs. Recent works have begun exploring the computational complexity of vertex-splitting based problems under different graph classes and operation variants ~\cite{Abu-KhzamThoumi2025, AhmedKobourovKryven2022, BaumannPfretzschnerRutter2024, Splitthesis, FirbasSorge2024, LiuEtal2025, NollenburgEtal2025}, with the problem of vertex splitting having implications that extend beyond graph drawings and planarization, influencing areas such as circuit design~\cite{PaikReddySahni1998} and statistical analysis~\cite{DavoodiJavadiOmoomi2016, GrammEtal2007}.  Various graph classes, including Cluster~\cite{Splitthesis}, Bicluster~\cite{Abu-KhzamThoumi2025, BentertDrangeHaugen2025}, Claw-Free~\cite{Abu-KhzamThoumi2025}, Bipartite~\cite{FirbasSorge2024}, and Pathwidth-one~\cite{BaumannPfretzschnerRutter2024}, have been extensively studied. However, simpler graph classes, such as constellations, cycle graphs, and linear forests, have been largely overlooked. It is noteworthy that these simpler classes have been analyzed in the context of other graph operations~\cite{GareyJohnsonStockmeyer1974, CardosoKaminskiLozin2007, KochPardalFernandes2024}. In this work, we contribute to this line of research by investigating the complexity of vertex splitting with respect to specific families of graphs, including constellations, cycle graphs, linear forests, and bipartite graphs. 

\textbf{Classical and Parameterized Complexity of the \textsc{$\mathcal{F}$-Inclusive}/\textsc{Exclusive} \textsc{Vertex Splitting} problems in the literature:}
\textsc{Planar Graph-Inclusive Vertex Splitting} is known to be NP-complete \cite{FariaFigueiredoMendonca2001}. Nöllenburg et al. \cite{NollenburgEtal2025} presented an FPT algorithm for solving \textsc{Planar Graph-Inclusive Vertex Splitting}. Baumann et al. \cite{BaumannPfretzschnerRutter2024} showed that \textsc{$\mathcal{F}$-Inclusive Vertex Splitting} is FPT parameterized by the number of splits when either restricted to input graphs of bounded treewidth and properties characterizable by monadic second-order
graph logic, or alternatively, for a minor-closed class $\mathcal{F}$. It is also known that \textsc{Pathwidth-One Inclusive Vertex Splitting} is Fixed Parameter Tractable (FPT) \cite{BaumannPfretzschnerRutter2024}. \textsc{$s$-Club Cluster-Inclusive Vertex Splitting} \cite{Abu-KhzamDavotIsenmannThoumi2026} is NP-complete and is shown to be FPT. Very recently, Abu-Khzam and Thoumi \cite{Abu-KhzamThoumi2025} showed that {\sc Claw-Free Exclusive Vertex Splitting} is NP-complete in general, while admitting a polynomial-time algorithm when the input graph has maximum degree four.

There are some interesting results for \textsc{$\mathcal{F}$-Inclusive Vertex Splitting} for some graph classes $\mathcal{F}$ in the thesis \cite{Splitthesis} of Firbas. \textsc{$\mathcal{F}$-Inclusive Vertex Splitting} is NP-complete \cite{FirbasSorge2024, Splitthesis} for various hereditary graph classes, including, but not limited to, cluster graphs, bipartite graphs, perfect graphs, graphs free of finite sets of (induced) cycles, graphs free of a single biconnected forbidden (induced) subgraph, graphs free of a set of triconnected forbidden (induced) subgraphs with bounded diameter, and graphs free of a set of 4-connected forbidden (induced) subgraphs. \textsc{Forest-, Split-, and Threshold- Inclusive Vertex Splitting} can be solved in polynomial-time \cite{Splitthesis}. \textsc{Cluster-Inclusive Vertex Splitting} \cite{FirbasEtal2025, Splitthesis} is NP-hard and admits a linear kernel.

\subsection{Vertex Splitting Operation and Problem Definitions}~\label{Sec: Problemdefinition}
A \emph{vertex split} is a graph modification in which a vertex  $v$ is replaced by 
two new vertices, $v_1$ and $v_2$. Each edge originally incident to $v$ is reassigned to either $v_1$ or $v_2$. Abu-Khzam et al. \cite{Abu-KhzamEtal2018} proposed two types of vertex splitting operations: Inclusive Vertex Splitting and Exclusive Vertex Splitting. We now formally define both notions below.   

\begin{definition}[Inclusive Vertex Splitting]
Let \( G = (V, E) \) be a graph and \( v \in V \) be a vertex. An \textit{inclusive vertex split} replaces \( v \) with two new vertices \( v_1 \) and \( v_2 \), such that:
\begin{itemize}
    \item \( N(v_1) \cup N(v_2) = N(v) \), where \( N(v) \) is the neighborhood of \( v \);
    \item \( N(v_1) \cap N(v_2) \) may be non-empty, allowing \( v_1 \) and \( v_2 \) to share neighbors.
\end{itemize}
\end{definition}

\begin{definition}[Exclusive Vertex Splitting]
Let \( G = (V, E) \) be a graph and \( v \in V \) be a vertex. An \textit{exclusive vertex split} replaces \( v \) with two new vertices \( v_1 \) and \( v_2 \), such that:
\begin{itemize}
    \item \( N(v_1) \cup N(v_2) = N(v) \);
    \item \( N(v_1) \cap N(v_2) = \emptyset \), ensuring that \( v_1 \) and \( v_2 \) have disjoint neighborhoods.
\end{itemize}
\end{definition}

The key difference between these two operations lies in the treatment of vertex neighborhoods. In Inclusive Vertex Splitting, $v_1$ and $v_2$ are allowed to share neighborhoods, whereas in Exclusive Vertex Splitting, their neighborhoods must be disjoint.

If $v$ is  split into $v_1$ and $v_2$, then $v_1$ and $v_2$ are called the \emph{descendants} of $v$. Conversely, $v$ is called the \emph{ancestor} of $v_1$ and $v_2$. Every vertex split increases the number of vertices in a graph by exactly one. Typically, we are interested in carrying out more than one vertex split. To facilitate this, we introduce the concept of a splitting sequence.

\begin{definition}  
A \textit{splitting sequence} of $k$ splits is a sequence of graphs $G_0, G_1, \ldots, G_k$, such that $G_{i+1}$ is obtainable from $G_i$ via a vertex split for $i \in \{0, \dots, k-1\}$. 
\end{definition}

\noindent The notion of \textit{descendant vertices} (resp. \textit{ancestor vertices}) is extended in a transitive and reflexive way to splitting sequences.

For each fixed graph class $\mathcal{F}$, we define distinct vertex splitting problems as follows.

\begin{center}
\fbox{\begin{minipage}{38.7em}\label{FIVS}
   \textsc{$\mathcal{F}$-Inclusive Vertex Splitting ($\mathcal{F}$-IVS)}\\
   \textbf{Input:} An undirected graph \( G=(V,E) \) and a positive integer \( k \).\\
   \textbf{Question:} Is there a sequence of at most $k$ inclusive vertex splits that transforms $G$ into a graph $G'$ such that $G'\in \mathcal{F}$?
\end{minipage} }
\end{center}

\begin{center}
\fbox{\begin{minipage}{38.7em}\label{FEVS}
   \textsc{$\mathcal{F}$-Exclusive Vertex Splitting ($\mathcal{F}$-EVS)}\\
   \textbf{Input:} An undirected graph \( G=(V,E) \) and a positive integer \( k \).\\
   \textbf{Question:} Is there a sequence of at most $k$ exclusive vertex splits that transforms $G$ into a graph $G'$ such that $G'\in \mathcal{F}$?
\end{minipage} }
\end{center}

In this work, we investigate the computational complexity of {\sc $\mathcal{F}$-Inclusive Vertex Splitting} and {\sc $\mathcal{F}$-Exclusive Vertex Splitting}, where $\mathcal{F}$ corresponds to the class of constellations, cycle graphs, linear forests, and bipartite graphs. When we consider inclusive vertex splitting and $\mathcal{F}$ corresponds to the class of constellations, cycle graphs, linear forests, and bipartite graphs, the resulting problems are referred to as \textsc{Constellation-Inclusive Vertex Splitting (Constellation-IVS)}, \textsc{Cycle Graph-Inclusive Vertex Splitting (Cycle Graph-IVS)}, \textsc{Linear Forest-Inclusive Vertex Splitting (Linear Forest-IVS)}, and \textsc{Bipartite-Inclusive Vertex Splitting (Bipartite-IVS)}, respectively. An analogous naming convention is adopted for the exclusive variant, with the term "Inclusive" replaced by "Exclusive".  

\subsection{Our Results}\label{Sec: Results}

We investigate the classical complexity of {\sc $\mathcal{F}$-Inclusive/Exclusive Vertex Splitting} for different classes $\mathcal{F}$. We observe that {\sc $\mathcal{F}$-Decomposition} and {\sc $\mathcal{F}$-Exclusive Vertex Splitting} are closely related. By using this observation, we prove that {\sc Constellation-Inclusive Vertex Splitting} and {\sc Constellation-Exclusive Vertex Splitting} are NP-complete on cubic graphs and even in planar graphs of degree at most $3$.

We show that both {\sc Cycle Graph-Inclusive Vertex Splitting} and {\sc Cycle Graph-Exclusive Vertex Splitting} are polynomial-time solvable, by showing the relation of these problems with finding closed walks on the graph.

We also show that both {\sc Linear Forest-Inclusive Vertex Splitting} and {\sc Linear Forest-Exclusive Vertex Splitting} are polynomial-time solvable, drawing inspiration from the cycle graph case, but this time searching for open walks rather than closed ones.

Firbas and Sorge \cite{Splitthesis,FirbasSorge2024} showed that {\sc Bipartite-Inclusive Vertex Splitting} is NP-hard by giving a reduction from {\sc 2-Subdivided Cubic Vertex Cover}. We prove that {\sc Bipartite-Inclusive Vertex Splitting} and {\sc Bipartite Vertex Deletion} are equivalent problems. This provides an alternative approach to proving the NP-hardness of {\sc Bipartite-Inclusive Vertex Splitting}. So, the existence of a kernel and FPT algorithm for {\sc Bipartite Vertex Deletion} provides the existence of a kernel and FPT algorithm for {\sc Bipartite-Inclusive Vertex Splitting}. We conclude the same for {\sc Bipartite-Exclusive Vertex Splitting} as we show that  {\sc Bipartite-Exclusive Vertex Splitting} and {\sc Bipartite Vertex Deletion} are equivalent problems.

The Table \ref{ourresults} summarizes the results given in this paper.

\begin{table}[htbp]
\caption{Classical Complexity of various {\sc $\mathcal{F}$-Inclusive/Exclusive Vertex Splitting } problems}\label{ourresults}
\begin{tabular*}{\textwidth}{@{\extracolsep\fill}lcccccc}
\toprule%
Graph Class $\mathcal{F}$ &  $\mathcal{F}$-{\sc Inclusive Vertex Splitting} & $\mathcal{F}$-{\sc Exclusive Vertex Spliitng} \\
\midrule
Constellation & NP-hard (Theorem (\ref{Res:RCIVS})) & NP-hard (Corollary (\ref{Res:RFEVS})) \\
Cycle Graph & Polynomial-time (Theorem (\ref{Res:CGIVS})) & Polynomial-time (Theorem (\ref{Res:CGEVS})) \\
Linear Forest &   Polynomial-time (Corollary (\ref{Res:RLFIVS})) & Polynomial-time (Theorem (\ref{Res:RLFEVS})) \\
Bipartite & NP-hard (Corollary (\ref{Res:BIVS})) & NP-hard (Corollary (\ref{Res:BEVS}))  
\end{tabular*}
\end{table}

\subsection{Preliminaries}\label{Sec: Preliminaries}

Throughout this paper, $G=(V, E)$ denotes a finite, simple, and undirected graph, where $V$ and $E\subseteq V \times V$ represent the vertex and edge set, respectively. An edge $(u,v)\in E$ is also written as $uv$ for convenience. For  $u\in V$, we define \textit{open neighborhood} of $v$ as $N(u)=\{v\in V : (u,v) \in E\}$ and the \textit{closed neighborhood} of $v$ as $N[u]=N(u) \cup \{u\}$.
The  \emph{degree} of $u \in V$ is $|N(u)|$ and denoted by $d_G(u)$. The subgraph induced by $U \subseteq V$ is denoted by $G[U]$. The subgraph induced by $F \subseteq E$ is denoted by $G[F]$.
We define $G-U= G[V\setminus U]$, $G-F=(V,E\setminus F)$ and $G+F=(V,E\cup F)$.
A \textit{path} is a sequence of vertices $v_1, \dots v_m$ such that $\{v_i,v_{i+1}\} \in E(G)$ for $i \in [m-1]$. A graph is called \textit{connected} if there is a path between every pair of vertices. If the graph is disconnected, each of its connected induced subgraphs is called a \textit{component}. A graph is called a \textit{linear forest} if all its components are paths. A \textit{cycle} is a connected graph in which all vertices have degree exactly $2$. A \textit{cubic graph} is a graph that has all vertices of degree exactly $3$. A \textit{tree} is a connected acyclic graph. A \textit{forest} is a graph that has all its components as trees. A \textit{star} is a graph with a single central vertex adjacent to all other vertices. A \textit{bipartite graph} is a graph whose vertex set can be partitioned into two sets such that no two vertices in the same set have an edge between them.\\

A problem is called \emph{fixed-parameter tractable (FPT)} if it can be solved in time of the form $f(k) \cdot \mathrm{poly}(n)$, where $f$ is a computable function that depends only on the parameter $k$, and $n$ denotes the size of the input. A related concept is that of a \emph{kernel}: a problem is said to admit a kernel if there exists a polynomial-time preprocessing algorithm that transforms any instance $(I, k)$ into an equivalent instance $(I', k')$ whose size is bounded solely by some function of $k$. For an in depth reading of these topic, we refer the reader to the book-\cite{bluebook}.

\subsection{\texorpdfstring{$\Gamma$-Decomposition Framework and its connection to Vertex Splitting}{Gamma-Decomposition Framework and its connection to Vertex Splitting}}\label{chap:gammadecomposition}

We now develop the theory of \textit{$\Gamma$-decomposition}, which will be helpful in explaining the reductions in this section.

Let $\Gamma$ be a family of graphs. A $\Gamma$-decomposition of a graph $G$ is a set $\{H_1,H_2,\ldots,H_l\}$ of
subgraphs of $G$ such that the sets $E(H_1),\ldots, E(H_l)$ form a partition of $E(G)$, and each $H_i$ is isomorphic to a graph of $\Gamma$. If \(\Gamma\) contains only one graph, say \(H\), we refer to a  $\Gamma$-decomposition  as an \(H\)-decomposition for simplicity.

Since $\Gamma$-decomposition partitions the edge set of a graph, it is commonly 
assumed in most contexts that $\Gamma$ contains only connected graphs. Throughout this work,  whenever we refer to a $\Gamma$-decomposition,  we assume that $\Gamma$ is a collection of connected graphs. Let $\{H_1,H_2,\ldots,H_l\}$ be a $\Gamma$-decomposition of $G$, where  each $H_i$ is connected. In such cases, knowing the edge set of $H_i$ 
suffices to determine its vertex set, since the vertices are simply the endpoints of the edges in $H_i$. Therefore, it is convenient to describe each $H_i$ by its edge set alone.

Let $E_i=E(H_i)$ for each $1 \leq i \leq l$. Then, for notational convenience, we write $\{E_1,E_2,\ldots,E_l\}$ instead of $\{H_1,H_2,\ldots,H_l\}$.

Formally, we assume that all graphs under consideration have no isolated vertices. This assumption implies that every vertex of $G$ must be included in at least one subgraph of the decomposition.
 
\begin{definition}
For a $\Gamma$-Decomposition $\mathcal{P}= \{H_1,H_2,\ldots,H_l\}$ of $G$, the weight of $\mathcal{P}$ is defined as
$$
{\normalfont\texttt{wgt}}_G(\mathcal{P}) = \sum_{v \in V} \#\mathcal{P}(v),
$$
where $\#\mathcal{P}(v) := |\{i \mid v \in V(H_i), H_i \in \mathcal{P}\}|$ counts the number of sets in $\mathcal{P}$ that contain the vertex $v$.
\end{definition}

If $G$ is a graph, then one can observe that

$$
\texttt{wgt}_G(\mathcal{P}) = \sum_{v \in V} \#\mathcal{P}(v)= \sum_{1\leq i\leq l} |V(H_i)|.
$$

\begin{lemma}\label{Desplitting}
    
Let $G=(V, E)$ be a graph without isolated vertices, and let $\mathcal{P}$ be a $\Gamma$-decomposition of $G$ such that ${\normalfont\texttt{wgt}}_G(\mathcal{P}) \leq |V(G)|+\alpha$. Then, either $G$ is already a disjoint union of graphs of $\Gamma$  or there exists a vertex $u \in V$ such that $u$ can be exclusively split in $G$ to obtain a graph $G'$ satisfying: 
\begin{enumerate}
    \item  $G'$ admits  a $\Gamma$-decomposition, say $\mathcal{P}^\prime$,
    \item  ${\normalfont\texttt{wgt}}_{G'}(\mathcal{P}^\prime) \leq |V(G')|+ \alpha -1$
    \item  $G'$ has no isolated vertices
\end{enumerate}

\end{lemma}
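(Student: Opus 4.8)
The plan is to prove the dichotomy by examining the multiplicities $\#\mathcal{P}(v)$. Since $G$ has no isolated vertices and $\Gamma$ consists of connected graphs, each of which must contain at least one edge (a single-vertex connected graph would be an isolated vertex, which is forbidden), every vertex lies in at least one subgraph of $\mathcal{P}$; hence $\#\mathcal{P}(v) \geq 1$ for all $v \in V$ and $\texttt{wgt}_G(\mathcal{P}) = \sum_v \#\mathcal{P}(v) \geq |V(G)|$. First I would dispose of the equality case: if $\#\mathcal{P}(v) = 1$ for every $v$, then the subgraphs of $\mathcal{P}$ are pairwise vertex-disjoint (a shared vertex would have multiplicity at least two) and together cover $V$, while their edge sets already partition $E$; thus $G$ is exactly the disjoint union of the $H_i \in \mathcal{P}$, each isomorphic to a graph of $\Gamma$, which is the first alternative of the statement.

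Otherwise some vertex $u$ satisfies $\#\mathcal{P}(u) \geq 2$, i.e.\ $u$ belongs to at least two subgraphs. I would fix one subgraph $H_j \in \mathcal{P}$ with $u \in V(H_j)$ and peel off its contribution at $u$. Because $E(G)$ is partitioned by the $E(H_i)$ and the graph is simple, each neighbor of $u$ is joined to $u$ by a unique edge lying in exactly one subgraph; so putting $N_1 = \{\, w : uw \in E(H_j)\,\}$ and $N_2 = N(u) \setminus N_1$ yields a genuine partition of $N(u)$, and the exclusive split of $u$ into $u_1$ (with neighborhood $N_1$) and $u_2$ (with neighborhood $N_2$) is well defined, producing a graph $G'$.

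To build $\mathcal{P}'$ I would relabel $u$ to $u_1$ inside $H_j$ and to $u_2$ inside every other subgraph containing $u$, leaving the remaining subgraphs untouched. Each modified subgraph is isomorphic (via the relabeling) to its original and hence to a graph of $\Gamma$, and a routine check confirms that the new edge sets still partition $E(G')$, giving condition (1). For condition (2) the key observation is that relabeling preserves the vertex count of each subgraph, so $\texttt{wgt}_{G'}(\mathcal{P}') = \sum_i |V(H_i)| = \texttt{wgt}_G(\mathcal{P})$; since a single split gives $|V(G')| = |V(G)| + 1$, the hypothesis $\texttt{wgt}_G(\mathcal{P}) \leq |V(G)| + \alpha$ rewrites as $\texttt{wgt}_{G'}(\mathcal{P}') \leq |V(G')| + \alpha - 1$, exactly as required. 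For condition (3) I would note that $N_1 \neq \emptyset$ (as $u$ has positive degree in the edge-bearing connected subgraph $H_j$) and $N_2 \neq \emptyset$ (as $\#\mathcal{P}(u) \geq 2$ forces an edge at $u$ outside $H_j$), so both $u_1$ and $u_2$ have degree at least one while every other vertex keeps its degree; thus $G'$ has no isolated vertices.

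The argument is largely bookkeeping, and I expect the only genuinely delicate points to be (a) guaranteeing that both sides $N_1, N_2$ of the split are non-empty — this is what prevents the split from creating an isolated vertex, and it is precisely where the assumptions that $\Gamma$-graphs carry edges and that $\#\mathcal{P}(u) \geq 2$ get used — and (b) tracking the weight carefully enough to see that it stays constant under the split while the vertex count grows by one; it is this mismatch that produces the crucial ``$-1$'' improvement in the bound.
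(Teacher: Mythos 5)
Your proposal is correct, and its overall architecture matches the paper's proof: pick a vertex $u$ lying in at least two subgraphs of $\mathcal{P}$ (the paper phrases the dichotomy contrapositively, via $V(H_i)\cap V(H_j)\neq\emptyset$, which is equivalent to your multiplicity condition $\#\mathcal{P}(u)\ge 2$), split off one subgraph's share of $u$ exclusively, relabel $u$ inside the affected subgraphs to obtain $\mathcal{P}'$, and observe that the weight is conserved while $|V|$ grows by one, which yields the $\alpha-1$ slack. The one substantive difference is how the split is defined, and your choice is the more careful one. The paper assigns neighborhoods by \emph{vertex membership}, setting $N_{G'}(u_{\mathrm{in}}):=N_G(u)\cap V(H_1)$ and $N_{G'}(u_{\mathrm{out}}):=N_G(u)\setminus V(H_1)$, whereas you assign them by \emph{edge membership}, $N_1=\{w : uw\in E(H_j)\}$ and $N_2=N(u)\setminus N_1$. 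These differ precisely when some edge $uw$ belonging to another subgraph $H_i$ ($i\neq 1$) has its endpoint $w$ inside $V(H_1)$: the paper's formula then routes $w$ to $u_{\mathrm{in}}$, yet its relabeled $H'_i$ claims the edge $u_{\mathrm{out}}w$, which does not exist in $G'$. A concrete instance: let $G$ be the $4$-cycle $u\,a\,v\,b$ plus the chord $uv$, with $\mathcal{P}=\{H_1,H_2\}$ where $H_1$ is the $4$-cycle and $H_2$ the single edge $uv$ (a star $K_{1,1}$); then $N_G(u)\setminus V(H_1)=\emptyset$, so the paper's split even leaves $u_{\mathrm{out}}$ isolated, while your edge-based split gives $N_1=\{a,b\}$, $N_2=\{v\}$ and everything goes through. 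Relatedly, you verify condition (3) by showing both $N_1$ and $N_2$ are nonempty, using that $\Gamma$-graphs are connected and carry at least one edge together with $\#\mathcal{P}(u)\ge 2$; the paper merely asserts that splitting preserves non-isolation, which holds for inclusive splits but is not automatic for exclusive ones. So your proposal is not just correct: its edge-based formulation silently repairs a genuine (though entirely fixable) slip in the paper's construction, and the rest of the paper's argument is unaffected once the split is defined your way.
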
 

\proof Suppose $\mathcal{P}= \{H_1,H_2,\ldots,H_l\}$ is a $\Gamma$-decomposition of $G$. If $G$ is not already a disjoint union of  graphs  of $\Gamma$, there must exist $H_i$ and $H_j$ such that $V(H_i) \cap V(H_j) \neq \emptyset$. Without loss of generality, we assume  $H_i=H_1$ and $H_j=H_2$. In this case, let $u \in V(H_1) \cap V(H_2)$.

We define $G'$ as the graph  obtained  by splitting  $u$ into  two vertices $u_{\text {in}}$ and $u_{\text {out}}$ as follows:
$$
\begin{gathered}
N_{G'}(u_{\text {in }}):=N_G(u) \cap V(H_1), \\
N_{G'}(u_{\text {out }}):=N_G(u) \setminus V(H_1).
\end{gathered}
$$

\begin{figure}[H]
    \centering
    \begin{tikzpicture}
    \node[circle,draw,fill=, inner sep=0 pt, minimum size=0.15cm]	(p1) at (-14,14) [label=left:$x$]{};
    \node[circle,draw,fill=, inner sep=0 pt, minimum size=0.15cm]	(p2) at (-14,12) [label=left:$y$]{};
    \node[circle,draw,fill=, inner sep=0 pt, minimum size=0.15cm]	(p3) at (-12,15) [label=right:$a$]{};
    \node[circle,draw,fill=, inner sep=0 pt, minimum size=0.15cm]	(p4) at (-12,13.5) [label=right:$b$]{};
    \node[circle,draw,fill=, inner sep=0 pt, minimum size=0.15cm]	(p5) at (-12,12) [label=right:$u$]{};
    \node[circle,draw,fill=, inner sep=0 pt, minimum size=0.15cm]	(p7) at (-10.5,12.5) [label=right:$d$]{};

    \node[circle,draw,fill=, inner sep=0 pt, minimum size=0.15cm]	(p8) at (-13,11) [label=left:$e$]{};
    \node[circle,draw,fill=, inner sep=0 pt, minimum size=0.15cm]	(p9) at (-11,11) [label=right:$f$]{};

    \node[circle,draw,fill=, inner sep=0 pt, minimum size=0.15cm]	(x1) at (-8,14) [label=left:$x$]{};
    \node[circle,draw,fill=, inner sep=0 pt, minimum size=0.15cm]	(x2) at (-8,12) [label=left:$y$]{};
    \node[circle,draw,fill=, inner sep=0 pt, minimum size=0.15cm]	(x3) at (-6,15) [label=right:$a$]{};
    \node[circle,draw,fill=, inner sep=0 pt, minimum size=0.15cm]	(x4) at (-6,13.5) [label=right:$b$]{};
    \node[circle,draw,fill=, inner sep=0 pt, minimum size=0.15cm]	(x7) at (-4.5,12.5) [label=right:$d$]{};
     \node[circle,draw,fill=, inner sep=0 pt, minimum size=0.15cm]	(x5) at (-6,12) [label=below:$u_{\text{out}}$]{};
     \node[circle,draw,fill=, inner sep=0 pt, minimum size=0.15cm]	(x6) at (-6,11) [label=right:$u_{\text{in}}$]{};

    \node[circle,draw,fill=, inner sep=0 pt, minimum size=0.15cm]	(x8) at (-7,10) [label=left:$e$]{};
    \node[circle,draw,fill=, inner sep=0 pt, minimum size=0.15cm]	(x9) at (-5,10) [label=right:$f$]{};
    
    \draw[orange](p1)--(p3);
    \draw[orange](p4)--(p3);
    \draw[orange](p1)--(p4);
    \draw[red](p1)--(p2);
    \draw[red](p1)--(p5);
    \draw[red](p2)--(p5);
    \draw[green](p7)--(p4);
    \draw[green](p7)--(p5);
    \draw[green](p4)--(p5);
    \draw[blue](p8)--(p5);
    \draw[blue](p9)--(p5);
    \draw[blue](p8)--(p9);

     \draw[orange](x1)--(x3);
    \draw[orange](x4)--(x3);
    \draw[orange](x1)--(x4);
    \draw[red](x1)--(x2);
    \draw[red](x1)--(x5);
    \draw[red](x2)--(x5);
    \draw[green](x7)--(x4);
    \draw[green](x7)--(x5);
    \draw[green](x4)--(x5);
    \draw[blue](x8)--(x6);
    \draw[blue](x9)--(x6);
    \draw[blue](x8)--(x9);

    \node(x0) at (-10.3,11.5) [label=left:$H_1$]{};
    \node(x0) at (-4.3,10.5) [label=left:$H'_1$]{};
    \node(x0) at (-11.7,9) [label=left:$G$]{};
     \node(x0) at (-5.7,9) [label=left:$G'$]{};
    \end{tikzpicture}
    \caption{On the left: a graph $G$ with a $\Gamma$-decomposition where $\Gamma= \{K_3\}$. The subgraph $H_1$ is shown in blue. On the right: the graph $G'$, obtained by splitting $u$ into $u_{\text{in}}$ and $u_{\text{out}}$, along with  a corresponding $\Gamma$-decomposition of $G'$.}
    \label{splitting}
\end{figure}
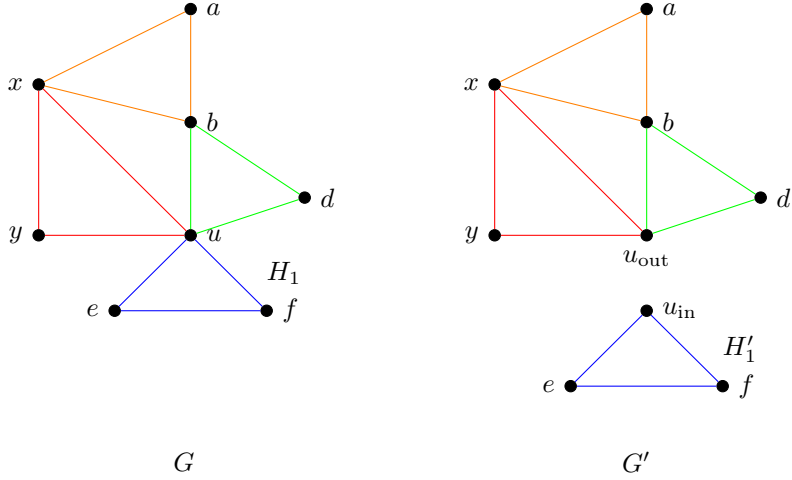

We construct  a $\Gamma$-decomposition of $G'$, denoted by $\mathcal{P}'=
\{H_1^{\prime}, H_2^{\prime}, \ldots,H_l^{\prime}\}$, where:
$$
H'_i:= \begin{cases} H_\text {in} & \text { if } H_i=H_1 
\\ H_{\text {out},i} & \text { if } u \in V(H_i) \wedge (H_i \neq H_1) 
\\ H_i & \text { otherwise }\end{cases}
$$

Here, we define $$H_{\text{in}}= \left((V(H_1) \setminus \{u\}) \cup \{u_\text{in}\},  \{u_{\text{in}}v \mid uv \in E(H_1)\} \cup \{xy \mid xy \in E(H_1), (x \neq u) \land (y \neq u)\}\right)$$
$$H_{\text{out},i}= \left( (V(H_i) \setminus \{u\}) \cup \{u_\text{out}\}, \{u_{\text{out}}v \mid uv \in H_i\} \cup \{xy \mid xy \in E(H_i), (x \neq u) \land (y \neq u)\}\right)$$ 

Informally, $H_{in}$ is the same as $H_1$, except the vertex $u$ is relabeled as $u_{in}$. 

We now verify that $\mathcal{P}'= \{H'_1,H'_2,\ldots,H'_l\}$ is indeed a $\Gamma$-decomposition of $G'$. We first verify that each $H'_i$ is isomorphic to $H_i$, and hence belongs to $\Gamma$.

For $H_1$,  define the map $\phi_1: H_1 \xrightarrow{} H'_1$ by: 

$$
\phi(v)= \begin{cases} u_\text {in} & \text { if } v=u 
\\ v & \text { otherwise }\end{cases}
$$ 
Clearly, $\phi_1$ is an isomorphism from $H_1$ to $H'_1$, hence $H'_1 \cong H_1$.

For any $i \neq 1$ such that $u \in V(H_i)$, define $\phi_{\text{out,i}}: H_i \xrightarrow{} H'_i$  by: 
$$
\phi_{\text{out,i}}= \begin{cases} u_\text {out} & \text { if } v=u 
\\ v & \text { otherwise }\end{cases}
$$ 
This defines an isomorphism from $H_i$ to $H'_i$, and thus $H'_i \cong H_i$.

For all $i$ such that $u \notin V(H_i)$, we have $H_i=H'_i$, so trivially $H'_i \cong H_i$.

Now we have to verify that all the edges of $G'$ are covered by $\mathcal{P}^\prime$. 
Note that the splitting operation does not change the total number of edges. Since each $H'_i \cong H_i$, we have $|E(H_i)|=|E(H'_i)|$. The construction ensures that the $H'_i$ are edge-disjoint. Therefore, the total number of edges covered by $\mathcal{P}^\prime$ is: 

$$\sum_{i}{}|E(H'_i)| = \sum_{i}{}|E(H_i)|= |E(G)|=|E(G')|.$$ 

Thus, $\mathcal{P}^\prime$ is a valid $\Gamma$-decomposition of $G'$.

We now examine the weight of the decomposition $\mathcal{P}^{\prime}$ and prove Condition $2$. We have
\begin{equation*}
\begin{split}
\texttt{wgt}(\mathcal{P}^\prime) &=\sum_{v \in V(G')}\#\mathcal{P}^\prime(v)\\
&= \sum_{v \in V(G') \setminus \{u_{\text{in}},u_{\text{out}}\}}\#\mathcal{P}^\prime (v)+ \# \mathcal{P}^\prime (u_{\text{in}})+ \# \mathcal{P}^\prime(u_{\text{out}}) \\
&= \sum_{v \in V(G) \setminus \{u_{\text{in}},u_{\text{out}}\}}\#\mathcal{P}(v)+ 1 + (\# \mathcal{P}(u)-1)\\
&= \sum_{v \in V(G)}\#\mathcal{P}(v)\\
&\leq |V(G)|+ \alpha\\ 
&= |V(G')|+ \alpha-1 \hfill
\end{split}         
\end{equation*}

Finally, since $G'$ is obtained from $G$ by splitting a vertex, it follows that if $G$ has no isolated vertices, then $G'$ does not either. 
\qed

\section{\textsc{Constellation-Inclusive/Exclusive Vertex Splitting}}~\label{Sec: Star Forest}
A disjoint union of stars is known as a \emph{Constellation}. In this section, we prove that both {\sc Constellation-Inclusive Vertex Splitting} and  {\sc Constellation-Exclusive Vertex Splitting} are NP-hard.  To do so, we establish  the following chain of polynomial-time reductions, starting from the classical NP-hard  \textsc{Vertex Cover} problem: 
$$\textsc{Vertex Cover} \le_P \textsc{Weighted Star Decomposition}$$
$$\le_P \textsc{Constellation-Inclusive Vertex Splitting}$$

We formally define the \textsc{Constellation-Inclusive Vertex Splitting} problem below.\\

\fbox
    {\begin{minipage}{38.7em}\label{CGVS}
       {\sc Constellation-Inclusive Vertex Splitting (Constellation-IVS)}\\
        \noindent{\bf Input:} An undirected graph $G=(V,E)$, and a positive integer $k$.\\
    \noindent{\bf Question:} Is there a sequence of at most $k$ inclusive vertex splitting that transforms $G$ into a constellation?
    \end{minipage} }\\
\vspace{3mm}

If $\Gamma$ is a collection of all star graphs, we will use the term \textit{Star-decomposition} to refer to a $\Gamma$-decomposition. For the sake of completion, we now define the notion of weight for a Star-decomposition.

\begin{definition}
For a Star-decomposition $\mathcal{P}$ of $G$, the weight of  $\mathcal{P}$ is defined as
$$
{\normalfont\text{wgt}}_G(\mathcal{P}) = \sum_{v \in V} \#\mathcal{P}(v),
$$
where $\#\mathcal{P}(v) := |\{P \mid v \in V(P), P \in \mathcal{P}\}|$ denotes the number of sets in $\mathcal{P}$ that contain $v$.
\end{definition}

Now, we formulate the associated decision problem:

\vspace{3mm}
\fbox
    {\begin{minipage}{38.7em}\label{WSD}
       {\sc Weighted Star-Decomposition}\\
        \noindent{\bf Input:} An undirected graph $G=(V,E)$, and a positive integer $k$.\\
    \noindent{\bf Question:} Does there exist  a Star-decomposition $\mathcal{P}$ of $G$ such that $\texttt{wgt}(\mathcal{P}) \leq k$?
    \end{minipage} }\\
    
\begin{theorem}
   {\sc  Weighted Star Decomposition} is NP-complete on cubic graphs and also on planar graphs of maximum degree $3$.
\end{theorem}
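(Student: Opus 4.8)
The plan is to reduce from \textsc{Vertex Cover}, which is NP-complete both on cubic graphs and on planar graphs of maximum degree $3$ \cite{GareyJohnsonStockmeyer1974}. The reduction I have in mind leaves the input graph untouched and only sets the weight bound, so the structural restrictions (cubic, planar of degree at most $3$) carry over verbatim; it therefore suffices to give a single reduction $\textsc{Vertex Cover} \le_P \textsc{Weighted Star-Decomposition}$ and observe that it preserves the graph class. Membership in NP is immediate: a Star-decomposition $\mathcal{P}$ of $G$ is a partition of $E(G)$ into at most $|E(G)|$ stars, which is a polynomial-size certificate, and verifying that the parts are edge-disjoint stars covering $E(G)$ and that $\texttt{wgt}_G(\mathcal{P}) \le k$ takes polynomial time.

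The heart of the argument is a clean identity between the weight of a Star-decomposition and its number of stars. Using the observation $\texttt{wgt}_G(\mathcal{P}) = \sum_i |V(H_i)|$ together with the fact that a star with $e_i$ edges has exactly $e_i + 1$ vertices, I obtain
$$\texttt{wgt}_G(\mathcal{P}) = \sum_i (e_i + 1) = |E(G)| + l,$$
where $l$ is the number of stars in $\mathcal{P}$. Hence minimizing the weight is the same as minimizing the number of stars. Next I will show that the minimum number of stars over all Star-decompositions of $G$ equals $\tau(G)$, the minimum size of a vertex cover of $G$. For the lower bound, given any Star-decomposition with $l$ stars, pick a center for each star (an arbitrary endpoint when the star is a single edge); every edge lies in some star and is incident to that star's center, so the set of chosen centers is a vertex cover of size at most $l$, giving $\tau(G) \le l$. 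For the upper bound, given a vertex cover $C$, assign each edge to one of its endpoints in $C$ (one exists since $C$ covers every edge) and group the edges by their assigned endpoint; each group is a star centered at that endpoint, yielding a Star-decomposition with at most $|C|$ stars. Since $G$ has no isolated vertices, every vertex is an endpoint of some edge and hence lies in some star, so the decomposition is valid.

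Combining these facts, $G$ admits a Star-decomposition of weight at most $k$ if and only if $\tau(G) \le k - |E(G)|$. The reduction therefore maps an instance $(G, t)$ of \textsc{Vertex Cover} to the instance $(G, k)$ of \textsc{Weighted Star-Decomposition} with $k := |E(G)| + t$; this is computable in polynomial time and, by the equivalence above, is a yes-instance exactly when $G$ has a vertex cover of size at most $t$. Because the graph is copied unchanged, if the source instance is cubic (respectively planar of maximum degree $3$) then so is the target, which establishes NP-completeness on both classes.

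The main obstacle I anticipate is the careful verification of the star-count versus vertex-cover equivalence, in particular handling the degenerate single-edge stars $K_2$ whose center is not canonically determined, and ensuring that in the lower-bound direction distinct stars sharing a common center do no harm (merging them only decreases $l$). The weight identity and NP-membership are routine, and the reduction itself becomes trivial once the identity $\texttt{wgt}_G(\mathcal{P}) = |E(G)| + l$ is established.
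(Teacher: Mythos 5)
Your proposal is correct and follows essentially the same route as the paper: a reduction from \textsc{Vertex Cover} that leaves the graph unchanged and sets the budget to $|E(G)|+t$, using the identity $\texttt{wgt}_G(\mathcal{P})=|E(G)|+(\text{number of stars})$ and the correspondence between star centers and vertex covers in both directions. Your explicit handling of the degenerate $K_2$ stars and of possibly empty groups is a small point of extra care that the paper's proof glosses over, but the substance is identical.
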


\proof It is easy to see that {\sc Weighted Star-Decomposition} is in NP. It is known that  {\sc Vertex Cover} is NP-complete on cubic graphs and also on planar graphs of maximum degree $3$. To prove that {\sc Weighted Star-Decomposition} is NP-hard on cubic graphs and also on planar graphs of maximum degree $3$, we give a reduction from  {\sc Vertex Cover}. We prove that $(G,k)$ is a yes-instance of {\sc Vertex Cover} if and only if $(G,m+k)$ is a yes-instance of {\sc Weighted Star-Decomposition}.

For the forward direction, we assume that $(G,k)$ is a yes-instance of {\sc Vertex Cover}. Suppose $X=\{v_1,v_2, \ldots, v_l\}$ is a vertex cover of $G$ with $l\leq k$. Now, we define $E_1= \{v_1w : w \in N(v_1) \}$ and $E_i= \{v_iw : w \in N(v_i) \setminus \{v_1,v_2,\ldots,v_{i-1} \} \} \text{ for } 2 \leq i \leq l$. We claim that $\mathcal{P}= \{G[E_1],G[E_2],\ldots,G[E_l]\}$ is a star decomposition with $\texttt{wgt}(\mathcal{P}) \leq m+k$.

Clearly, each $E_i$ is a star with center vertex $v_i$. Note that $G-X$ is an independent set, so all the edges of graph $G$ are either between $X$ and $G-X$ or between the vertices of $X$. The way the construction of $E_i$ is done and $X$ is a vertex cover makes it clear that $\mathcal{P}$ covers all edges of $G$. It is also clear that $E_i \cap E_j = \emptyset$ whenever $i \neq j$.

Note that if $E_i$ has $p_i$ edges, then the number of vertices in $E_i$ is $p_i+1$. Hence,
$$\texttt{wgt}(\mathcal{P})= \sum_{1 \leq i \leq l} |V(E_i)|= \sum_{1 \leq i \leq l} (p_i+1)= \sum_{1 \leq i \leq l}p_i+ l=m+l \leq m+k.$$  

For the reverse direction, we assume that there exists a star decomposition, say $\mathcal{S}$ of $G$ with $\texttt{wgt}(\mathcal{S})\leq m+k$. Suppose $\mathcal{S}=\{S_1,S_2, \ldots,S_q\}$.

Define $C=\{c_1,c_2,\ldots,c_q\}$ where $c_i$ is center of star $S_i$. Suppose a star with center $c_i$ contains $p_i$ number of edges, then the number of vertices is $p_i+1$ in that star. Now we see that $\texttt{wgt}(\mathcal{S})= \sum_{1 \leq i \leq q} (p_i+1)= \sum_{1 \leq i \leq q}p_i+q=m+q.$ Since $\texttt{wgt}(\mathcal{S}) \leq m+k$, we have $q \leq k$.

We claim that $C=\{c_1,c_2,\ldots,c_q\}$ is a vertex cover of $G$. Let $uv$ be an edge of the graph $G$. Since $\mathcal{S}$ is a sigma star cover of $G$, $uv$ must be in $S_i$ for some $i$. Since $S_i$ is a star graph, either $u$ or $v$ must be $c_i$. The edge $uv$ is covered by $C$ because $C$ contains $c_i$. So we have proved that $C$ is a vertex cover of size at most $k$, hence $(G,k)$ is a yes-instance of the {\sc Vertex Cover}.
\qed \\

Notice that {\sc Weighted Star Decomposition} is NP-hard for graphs without isolated vertices as well.

Now we aim to prove that {\sc Constellation-Vertex Splitting} is NP-hard by giving a polynomial-time reduction from {\sc Weighted Star Decomposition}. But we first prove a few lemmas.

\begin{lemma}\label{edgemerging}
Let $G=(V, E)$ be a graph, and let $G'=(V', E')$ be obtained from $G$ by inclusive vertex splitting of a vertex $u \in V$ into two vertices $v, w \in V'$. If $\mathcal{P}'$ is a star decomposition of $G'$, then there exists a star decomposition $\mathcal{P}$ of $G$ with ${\normalfont\texttt{wgt}}(\mathcal{P}) \leq  {\normalfont\texttt{wgt}} (\mathcal{P}')$.
\end{lemma}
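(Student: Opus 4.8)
The plan is to reverse the split by first removing the redundancy that the \emph{inclusive} split introduces, and only then relabelling. The key observation is that an inclusive split of $u$ into $v$ and $w$ duplicates precisely the edges to the common neighbours $C := N_{G'}(v) \cap N_{G'}(w)$: each $x \in C$ contributes a single edge $ux$ in $G$ but two edges $vx, wx$ in $G'$, whereas every other edge of $G$ has a unique counterpart in $G'$. In particular $|E(G')| = |E(G)| + |C|$, and the merge of $v, w$ back into $u$ is an edge-bijection only \emph{after} one copy of each duplicated pair is discarded. So I would not merge $\mathcal{P}'$ directly (that would cover some edge $ux$ twice); instead I would peel off the duplicates first.

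Concretely, let $G''$ be obtained from $G'$ by deleting the edge set $\{wx : x \in C\}$, and let $\mathcal{P}''$ be obtained from $\mathcal{P}'$ by deleting each such edge from the unique star of $\mathcal{P}'$ containing it. Deleting one edge $wx$ from a star removes a single leaf (the endpoint of $wx$ that is not the centre), so the result is again a star and its vertex count drops by exactly one; in the degenerate case where $wx$ is the star's only edge, the piece collapses to an isolated vertex which I discard, and the count drops by two. Hence $\mathcal{P}''$ is a star decomposition of $G''$, and since $|C|$ edges are deleted, $\texttt{wgt}(\mathcal{P}'') \le \texttt{wgt}(\mathcal{P}') - |C|$.

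Finally, in $G''$ the vertices $v$ and $w$ are non-adjacent with disjoint neighbourhoods $N_{G''}(v) = N_{G'}(v)$ and $N_{G''}(w) = N_{G'}(w) \setminus C$, so the map $\mu$ that identifies both $v$ and $w$ with $u$ and fixes every other vertex is an isomorphism $G'' \to G$. Because $N_{G''}(v) \cap N_{G''}(w) = \emptyset$, no star of $\mathcal{P}''$ can contain both $v$ and $w$ (such a star would force a common neighbour as its centre), so applying $\mu$ to each star of $\mathcal{P}''$ neither identifies two of its own vertices nor creates a parallel edge inside it; it merely renames a $v$ or a $w$ to $u$. Since $\{wx : x \in C\}$ has been removed entirely, no edge $ux$ is covered twice across distinct stars either. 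Thus $\mathcal{P} := \mu(\mathcal{P}'')$ is an edge-disjoint star cover of $G$ with $\texttt{wgt}(\mathcal{P}) = \texttt{wgt}(\mathcal{P}'')$, giving $\texttt{wgt}(\mathcal{P}) \le \texttt{wgt}(\mathcal{P}') - |C| \le \texttt{wgt}(\mathcal{P}')$.

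The only genuine obstacle is the bookkeeping forced by the inclusive (rather than exclusive) nature of the split: the duplicated edges to $C$ are exactly what make a naive relabelling of $\mathcal{P}'$ fail to be a decomposition, so the crux is to delete one copy of each duplicate before merging and to verify that these deletions keep every piece a star while strictly lowering the weight. The degenerate collapse of a star to a single vertex needs the small remark above, but the remainder is routine.
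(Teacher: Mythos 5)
Your proof is correct, and it takes a genuinely different route from the paper's. The paper merges in one shot: it defines a map $f$ that, inside each star $E_i'$ of $\mathcal{P}'$, replaces every edge $vz$ or $wz$ by $uz$, and then verifies by case analysis that each image is a star and that every edge of $G$ is covered. Your two-step variant --- first delete one copy $wx$ of each duplicated pair $\{vx,wx\}$, $x \in C$, so that the split retroactively becomes exclusive, and only then identify $v,w$ with $u$ via the map $\mu$ --- buys something the paper's verification actually skips: edge-disjointness of the resulting family. If $vx$ and $wx$ lie in \emph{different} stars of $\mathcal{P}'$ (take $N_G(u)=\{x,a,b\}$, split $u$ with $N_{G'}(v)=\{x,a\}$ and $N_{G'}(w)=\{x,b\}$, and let $\mathcal{P}'$ consist of the two stars centered at $v$ and at $w$), the paper's $f$ sends both stars to edge sets containing $ux$, so its $\mathcal{P}$ is a priori only an edge cover rather than a partition until one duplicate copy is pruned --- exactly the pruning your deletion step performs up front, with the correct bookkeeping that each deleted edge removes a leaf (or kills a one-edge star) and so lowers the weight by at least one. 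Because $N_{G''}(v)\cap N_{G''}(w)=\emptyset$ makes $\mu$ a genuine isomorphism $G''\to G$, disjointness and coverage of $\mu(\mathcal{P}'')$ come for free, no star of $\mathcal{P}''$ can contain both $v$ and $w$ (its center would be a common neighbor), and you even obtain the quantitatively sharper bound $\texttt{wgt}(\mathcal{P}) \le \texttt{wgt}(\mathcal{P}') - |C|$. What the paper's approach buys in exchange is brevity --- a single substitution map, with the case of $v$ and $w$ in a common star handled by set collapse (the center is then forced into $C$). Two cosmetic points you rely on but, like the paper, do not prove: $v$ and $w$ are non-adjacent in $G'$ (immediate since $N_{G'}(v)\cup N_{G'}(w)=N_G(u)\not\ni v,w$ for a simple graph), and $w$ may become isolated in $G''$ when $N_{G'}(w)\subseteq C$, which is harmless since the pieces of a decomposition are determined by their edge sets.
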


\proof Let $\mathcal{P}'=\{G'[E'_1],G'[E'_2],\ldots,G'[E'_t]\}$ be a star decomposition of $G'$. Then using 
$$
f(E'_i):= \begin{cases} (E'_i \setminus \{vz,wz \mid z \in V(E'_i)\}) \cup \{uz \mid vz \in E'_i \lor wz \in E'_i  \} & \text {if } V(E'_i) \cap\{v, w\} \neq \emptyset \\ 
E'_i & \text {otherwise }\end{cases}
$$
we define
$$
\mathcal{P}:=\left\{G[E_1],G[E_2], \ldots,G[E_t] \mid f\left(E'_i\right) = E_i \text{ for } 1\leq i \leq t \right\}.
$$
We claim that $\mathcal{P}$ satisfies the conditions of this lemma. First, we establish that $\mathcal{P}$ is a star decomposition of $G$. 

We begin by proving that all $G[E_i] \in \mathcal{P}$ are stars.

Let $G[E_i] \in \mathcal{P}$. Assume $f^{-1}(E_i)=E_i$. Observe that $E_i \cap\{v, w\}=\emptyset$. Clearly, $G[E_i]$ is a star.

Conversely, assume $f^{-1}(E_i) \neq E_i$. Then two cases can occur.\\

\textbf{Case 1}: $G'[f^{-1}(E_i)]$ contains both $v$ and $w$.
By assumption, $G'[f^{-1}(E_i)]$ is a star. Suppose $G'[f^{-1}(E_i)]$ contains both $v$ and $w$. By definition of vertex splitting, it follows that $v$ and $w$ can not be the center of the star $G'[f^{-1}(E_i)]$. Let $r$ be the center of $G'[f^{-1}(E_i)]$. Then we see that $G[E_i]$ is also a star with center $r$ and contains $u$.\\

\textbf{Case 2}: $G'[f^{-1}(E_i)]$ contains exactly one of $v$ and $w$. Without loss of generality, we assume that $G'[f^{-1}(E_i)]$ contains $v$ and not $w$. Then $E_i= (E'_i \setminus \{vz \mid vz \in E'_i\}) \cup \{uz \mid vz \in E'_i \}$. It is clear that $E'_i$ is isomorphic to $E_i$. Hence, $E_i$ is a star. \\

Now, we prove the second property, that is, all edges of $G$ are covered by $\mathcal{P}$. Let $v_1 v_2 \in E(G)$.\\

\textbf{Case 1}: $\{v_1, v_2\} \cap\{u\}=\emptyset$: The edge $v_1v_2$ is not affected by the split, therefore $v_1 v_2 \in E(G')$. Hence there is some $E'_i \in \mathcal{P}'$ such that $v_1v_2 \in E'_i$. Thus $v_1v_2 \in E_i=f(E'_i)$.\\

\textbf{Case 2}: $\{v_1, v_2\} \cap\{u\} \neq \emptyset$ : Without loss of generality, assume $v_1=u$. The definition of vertex splitting tells that either $v v_2 \in E(G')$ or $w v_2 \in E(G')$ must hold. Without loss of generality, assume the former. Since $\mathcal{P}'$ is a star-decomposition of $G'$ there exist some $G'[E'_j] \in \mathcal{P}'$ such that $v v_2 \in E'_j$. Thus, we find that $u,v_2 \in E_j=f(E'_j)$ and $G[E_j] \in \mathcal{P}$.\\

Therefore, $\mathcal{P}$ is a Star-decomposition of $G$. We observe that $f$ ranging over $\mathcal{P}'$ does not increase the cardinality of any image it maps. Hence we have, $\texttt{wgt}(\mathcal{P}) \leq \texttt{wgt}\left(\mathcal{P}'\right)$.
\qed \\ 

Since exclusive vertex splitting is also inclusive vertex splitting, the above proof works when we do exclusive vertex splitting of $u$.

\begin{theorem}\label{equivalency}
   Let $G=(V, E)$ be a graph without any isolated vertices. Then $(G,k)$ is a yes-instance of {\sc Constellation-Inclusive Vertex Splitting} if and only if $(G,|V(G)|+k)$ is a yes-instance of {\sc Weighted Star Decomposition}.
\end{theorem}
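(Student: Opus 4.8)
The plan is to treat the two instance parameters in lockstep. The key observation is that a splitting sequence of $j$ inclusive splits turns $G$ into a graph on exactly $|V(G)|+j$ vertices while preserving the edge set, and that a constellation $G'$ carries a canonical star decomposition, namely the one whose parts are its connected components (each a star). In that canonical decomposition every vertex lies in exactly one part, so its weight equals $|V(G')|$. This is the bridge between the two problems: the ``number of splits'' on one side becomes the ``excess of the decomposition weight over $|V|$'' on the other, which is precisely why the shift by $|V(G)|$ appears in the statement.

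For the forward direction, suppose $G = G_0, G_1, \ldots, G_j$ is a splitting sequence of $j \le k$ inclusive splits with $G_j$ a constellation. Since $G$ has no isolated vertices and splitting preserves this, $G_j$ has no isolated vertices either, so its component (star) decomposition $\mathcal{P}_j$ satisfies $\texttt{wgt}(\mathcal{P}_j) = |V(G_j)| = |V(G)| + j$. Walking back along the sequence, I apply Lemma~\ref{edgemerging} once per split: since $G_i$ is obtained from $G_{i+1}$ by un-doing an inclusive split, the lemma yields a star decomposition $\mathcal{P}_i$ of $G_i$ with $\texttt{wgt}(\mathcal{P}_i) \le \texttt{wgt}(\mathcal{P}_{i+1})$. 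After $j$ steps I obtain a star decomposition $\mathcal{P}_0$ of $G$ with $\texttt{wgt}(\mathcal{P}_0) \le |V(G)| + j \le |V(G)| + k$, so $(G,|V(G)|+k)$ is a yes-instance of {\sc Weighted Star Decomposition}.

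For the reverse direction, I would iterate Lemma~\ref{Desplitting} with $\Gamma$ the family of all stars, so that a $\Gamma$-decomposition is a star decomposition and a disjoint union of graphs of $\Gamma$ is a constellation. Starting from a star decomposition of $G$ of weight $\le |V(G)| + k$, set the slack $\alpha := k$. Each application of the lemma either certifies that the current graph is already a constellation, or performs one exclusive split, producing a graph with no isolated vertices together with a star decomposition of weight at most $|V| + (\alpha-1)$; I then decrement $\alpha$ and repeat. As exclusive splits are in particular inclusive, this builds a sequence of at most $k$ inclusive splits, provided the process halts within $k$ steps.

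The crux — and the step I expect to require the most care — is arguing termination in the reverse direction. Here I would invoke the elementary lower bound $\texttt{wgt}(\mathcal{P}) \ge |V|$, valid for any star decomposition of a graph without isolated vertices, since every vertex is an endpoint of some edge and hence lies in at least one part, giving $\#\mathcal{P}(v) \ge 1$. Combined with the bound $\texttt{wgt} \le |V| + \alpha$ maintained by Lemma~\ref{Desplitting}, once $\alpha$ reaches $0$ we obtain $|V| \le \texttt{wgt} \le |V|$, forcing $\#\mathcal{P}(v) = 1$ for every vertex; the parts are then pairwise vertex-disjoint and, partitioning the edges, exhibit the graph as a disjoint union of stars, i.e.\ a constellation. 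Hence the dichotomy of Lemma~\ref{Desplitting} cannot return ``not yet a constellation'' when $\alpha = 0$, so a constellation is reached after at most $k$ exclusive splits, which completes the reverse direction and the equivalence.
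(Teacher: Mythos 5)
Your proof is correct and follows essentially the same route as the paper: the forward direction constructs the canonical component decomposition of the final constellation (weight $|V(G)|+j$) and pulls it back along the splitting sequence via Lemma~\ref{edgemerging}, while the reverse direction iterates Lemma~\ref{Desplitting} with decreasing slack $\alpha$ and closes with the same squeeze $|V| \le \texttt{wgt} \le |V|$ forcing every vertex into exactly one star, which is exactly the paper's termination argument. One harmless slip worth noting: an inclusive split need not preserve the edge set (both copies may retain a common neighbor), but your argument only ever uses the vertex count $|V(G_j)| = |V(G)|+j$, so nothing depends on that parenthetical claim.
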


\proof Let $(G,k)$ be a yes-instance of {\sc Constellation-Inclusive Vertex Splitting}, that is, there exists a sequence of at most $k$ inclusive vertex splitting in $G$ such that the resulting graph $G'$ is a disjoint union of stars.

Let $G_0, \ldots, G_{l}$ be a sequence of graphs with $G_0=G$ and $l \leq k$ such that each graph, except $G_0$, is obtained from its predecessor via an inclusive vertex split, and $G_{l}$ is a disjoint union of stars. By identifying all connected components of $G_{l}$ with their vertex sets, we can construct a star decomposition $\mathcal{P}_{l}$ of $G_{l}$ with \texttt{wgt}$(\mathcal{P}_l)=|V(G_l)|$. Each split used in the construction of $G_1, \ldots, G_{l}$ introduces exactly one new vertex, therefore $|V(G_l)|=|V(G)|+l$. Combining this with the fact that $l \leq k$, we derive \texttt{wgt}$(\mathcal{P}_l) \leq |V|+k$. Using the sequence $G_0, \ldots, G_{l}$ in reverse order, we iteratively apply Lemma \ref{edgemerging} $l$ times using $\mathcal{P}_{l}$ and $G_{l}$ as base case and obtain $\mathcal{P}_0, \ldots, \mathcal{P}_{l}$. In particular, it follows that $\mathcal{P}_0$ is a star decomposition of $G$ satisfying \texttt{wgt}$(\mathcal{P}_0)\leq |V(G)|+k $. Thus, $(G,|V(G)|+k)$ is a yes-instance.

Let $(G,|V(G)|+k)$ be a yes-instance of {\sc Weighted Star Decomposition}, that is, there exists a star decomposition, say $\mathcal{P}=\{ H_1, H_2, \dots, H_p\}$ with $\texttt{wgt}(\mathcal{P}) \leq n+k$. By iteratively applying Lemma \ref{Desplitting} for a number of times, call it $l$, either until a constellation is obtained as a direct result of the lemma, or alternatively, stopping after $k$ iterations, we can obtain the sequences $G_0, \ldots, G_{k}$ and $\mathcal{P}_0, \ldots, \mathcal{P}_{k}$. Where $G_i \forall 0 \geq i \geq k$ have no isolated vertices. 

We shall now verify that $G_{k}$ must be a disjoint union of stars. As a consequence of the applications of Lemma \ref{Desplitting}, we get $\texttt{wgt}(\mathcal{P}_{k}) \leq |V(G_{k})|$. By considering the fact that for each vertex $v \in V\left(G_{k}\right)$ there exists a subgraph, say $H_i^k \in \mathcal{P}_k$ with $v \in E_i^k$ since $G_k$ has no isolated vertices, we derive $\texttt{wgt}(\mathcal{P}_k)\geq |V(G_{k})|$. Thus, we have that \texttt{wgt}$(\mathcal{P}_{k})=|V(G_{k})|$ and it follows that $G_{k}$ is a disjoint union of stars. Hence $(G,k)$ is a yes-instance of {\sc Constellation-Inclusive Vertex Splitting}.
\qed \\

\noindent \textbf{Remark:} 
For ease of explanation, we considered a graph with no isolated vertices. However, we note that our method also applies to graphs with isolated vertices. The only crucial observation is that, at any stage, a vertex split that separates a vertex $v$ into an isolated vertex $u$ and another vertex $w$ can be safely avoided. This is because $w$ is essentially just a relabeling of $v$. 

\begin{theorem}\label{Res:RCIVS}
    {\sc Constellation-Inclusive Vertex Splitting} is NP-complete on cubic graphs and also on planar graphs of maximum degree $3$.
\end{theorem}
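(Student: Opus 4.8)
The plan is to obtain the statement almost immediately by combining two results already established above: the theorem showing that {\sc Weighted Star-Decomposition} (abbreviated WSD below) is NP-complete on cubic graphs and on planar graphs of maximum degree $3$, and the equivalence of Theorem~\ref{equivalency}, which states that $(G,k)$ is a yes-instance of {\sc Constellation-Inclusive Vertex Splitting} (abbreviated Constellation-IVS) if and only if $(G,|V(G)|+k)$ is a yes-instance of WSD. Since this equivalence is a polynomial-time transformation that only rescales the parameter and leaves the graph $G$ itself untouched, it transfers both the NP-hardness and the degree/planarity restrictions of WSD directly onto Constellation-IVS.

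First I would argue membership in NP. A certificate is a splitting sequence $G_0=G,\,G_1,\dots,G_l$ with $l\le k$ whose final graph $G_l$ is a constellation; since each split adds exactly one vertex and splitting every vertex apart turns $G$ into a disjoint union of edges (a constellation) using at most $2|E(G)|$ splits, we may assume $k\le 2|E(G)|$. Thus the certificate has polynomial size and its validity is checkable in polynomial time, so Constellation-IVS $\in$ NP.

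For NP-hardness I would reduce from WSD restricted to cubic (respectively planar, maximum degree $3$) graphs without isolated vertices; the remark following the WSD theorem ensures the hard instances may be taken with no isolated vertices, which is precisely the hypothesis needed by Theorem~\ref{equivalency}. Given such an instance $(G,m)$ with $n=|V(G)|$, I would set $k:=m-n$ and output $(G,k)$. By Theorem~\ref{equivalency}, $(G,k)$ is a yes-instance of Constellation-IVS if and only if $(G,n+k)=(G,m)$ is a yes-instance of WSD; and since the graph $G$ is unchanged, it stays cubic (respectively planar of maximum degree $3$). Hence the NP-hardness of WSD on these classes carries over, and together with membership in NP this gives NP-completeness.

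The only step needing real care is the parameter translation $k=m-n$. For a graph without isolated vertices, every star decomposition has weight at least $n$, so any WSD instance with $m<n$ is a trivial no-instance and can be sent to a fixed no-instance of Constellation-IVS, whereas for $m\ge n$ the value $k=m-n\ge 0$ is a legitimate parameter. Confirming this bookkeeping, and that the hard WSD instances genuinely avoid isolated vertices so that Theorem~\ref{equivalency} applies, is the main (and essentially only) obstacle; everything else follows mechanically.
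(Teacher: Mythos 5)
Your proposal is correct and follows essentially the same route as the paper: NP membership via a polynomially bounded splitting-sequence certificate, plus NP-hardness by invoking Theorem~\ref{equivalency} to transfer the NP-completeness of {\sc Weighted Star Decomposition} on cubic and planar maximum-degree-$3$ graphs, with the identity map on $G$ preserving the graph class. Your extra bookkeeping (the bound $k\le 2|E(G)|$ for the certificate, the no-instance handling when the WSD budget is below $|V(G)|$, and the explicit appeal to the isolated-vertex remark) only makes explicit what the paper's proof leaves implicit.
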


\proof It is easy to see that {\sc Constellation-Inclusive Vertex Splitting} is in NP, because a certificate of {\sc Constellation-Inclusive Vertex Splitting} can be guessed and checked in polynomial time.
By Theorem \ref{equivalency}, we conclude that deciding a $(G,k)$ instance of {\sc Constellation-Inclusive Vertex Splitting} is equivalent to deciding the instance $(G, k - |V(G)|)$ of {\sc Weighted Star Decomposition}. Since {\sc Weighted Star Decomposition} is NP-complete on cubic graphs and even on planar graphs of maximum degree $3$, {\sc Constellation-Inclusive Vertex Splitting} is NP-hard on cubic graphs and also on planar graphs of maximum degree $3$. Consequently, we conclude that {\sc Constellation-Inclusive Vertex Splitting} is NP-complete on cubic graphs and also on planar graphs of maximum degree $3$.\qed 

Note that the proof of Theorem \ref{equivalency} also applies to the case of exclusive vertex splitting. In the forward direction of the proof, we used Lemma \ref{edgemerging}, and as observed, Lemma \ref{edgemerging} also applies to exclusive vertex splitting. Hence, there is no issue in the forward direction of the proof of Theorem \ref{equivalency}. In the backward direction, we used Lemma \ref{Desplitting}, which involves exclusive vertex splitting. Based on this, we now define the following problem,\\

\fbox
    {\begin{minipage}{38.7em}\label{SFEVS}
       {\sc Constellation-Exclusive Vertex Splitting (Star Forest-EVS)}\\
        \noindent{\bf Input:} An undirected graph $G=(V,E)$, and a positive integer $k$.\\
    \noindent{\bf Question:} Is there a sequence of at most $k$ exclusive vertex splitting that transforms $G$ into a constellation?
    \end{minipage} }\\

Then, we have the following two results.

\begin{corollary}
     Let $G=(V, E)$ be a graph. Then $(G,k)$ is a yes-instance of {\sc Constellation-Exclusive Vertex Splitting} if and only if $(G,|V(G)|+k)$ is a yes-instance of {\sc Weighted Star Decomposition}
\end{corollary}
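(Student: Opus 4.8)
The plan is to prove this corollary by essentially recycling the argument used in Theorem~\ref{equivalency}, observing that every ingredient of that proof is already available (or has been explicitly noted to carry over) for the exclusive variant. The corollary asserts the same biconditional as Theorem~\ref{equivalency}, but with \emph{inclusive} replaced by \emph{exclusive} vertex splitting, and notably it drops the hypothesis that $G$ has no isolated vertices. So my first move is to fix a graph $G$ and an integer $k$, and handle the two directions separately, pointing to the two lemmas that govern them.

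For the forward direction, I would assume $(G,k)$ is a yes-instance of \textsc{Constellation-Exclusive Vertex Splitting}, so there is a splitting sequence $G_0 = G, G_1, \dots, G_l$ with $l \le k$, each step an \emph{exclusive} split, and $G_l$ a constellation. As in Theorem~\ref{equivalency}, I read off the natural star decomposition $\mathcal{P}_l$ of $G_l$ given by its connected components, with $\texttt{wgt}(\mathcal{P}_l) = |V(G_l)| = |V(G)| + l \le |V(G)| + k$. I then apply Lemma~\ref{edgemerging} in reverse $l$ times to pull this back to a star decomposition $\mathcal{P}_0$ of $G$ with $\texttt{wgt}(\mathcal{P}_0) \le |V(G)| + k$. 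The key justification, already recorded in the paragraph following Lemma~\ref{edgemerging}, is that \emph{exclusive splitting is a special case of inclusive splitting}, so Lemma~\ref{edgemerging} applies verbatim to each reverse step. Hence $(G, |V(G)| + k)$ is a yes-instance of \textsc{Weighted Star Decomposition}.

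For the reverse direction, I would assume a star decomposition $\mathcal{P}$ of $G$ with $\texttt{wgt}(\mathcal{P}) \le |V(G)| + k$ and iteratively apply Lemma~\ref{Desplitting}, which is stated for \emph{exclusive} splits precisely, setting $\alpha = k$ initially. Each application either certifies that the current graph is already a constellation or produces an exclusive split reducing the weight surplus $\alpha$ by one while preserving the decomposition and the absence of isolated vertices. After at most $k$ iterations the surplus is exhausted, forcing $\texttt{wgt}(\mathcal{P}_k) \le |V(G_k)|$; combined with the lower bound $\texttt{wgt}(\mathcal{P}_k) \ge |V(G_k)|$ (valid since no isolated vertices means each vertex lies in at least one star), this yields $\texttt{wgt}(\mathcal{P}_k) = |V(G_k)|$, which forces $G_k$ to be a disjoint union of stars. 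Since every split performed was exclusive, this exhibits $(G,k)$ as a yes-instance of \textsc{Constellation-Exclusive Vertex Splitting}.

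The only genuine wrinkle, and the step I expect to require a sentence of care, is the dropped no-isolated-vertices hypothesis: Lemma~\ref{Desplitting} demands an isolated-vertex-free input, whereas the corollary states no such restriction. I would dispatch this exactly as the \textbf{Remark} following Theorem~\ref{equivalency} does, namely by noting that isolated vertices never need to be created or involved in a split (splitting off an isolated vertex is merely a relabeling and can be avoided), and that pre-existing isolated vertices are trivially stars in any constellation and can be set aside throughout the argument; restricting attention to the isolated-vertex-free part of $G$ therefore loses nothing. With that caveat absorbed, both directions are immediate transcriptions of the Theorem~\ref{equivalency} proof with ``inclusive'' read as ``exclusive'', so I would keep the write-up short and defer to the two lemmas rather than re-deriving them.
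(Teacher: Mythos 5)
Your proposal is correct and matches the paper's own treatment: the paper likewise derives this corollary by rerunning the proof of Theorem~\ref{equivalency}, using that Lemma~\ref{edgemerging} applies to exclusive splits (they are a special case of inclusive splits) for the forward direction and that Lemma~\ref{Desplitting} already produces exclusive splits for the backward direction, with isolated vertices dispatched exactly as in the Remark you cite.
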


\begin{corollary}\label{Res:RFEVS}
    The {\sc Constellation-Exclusive Vertex Splitting} problem is NP-complete on cubic graphs and also on planar graphs of maximum degree $3$.
\end{corollary}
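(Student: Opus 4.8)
The plan is to follow the proof of Theorem~\ref{Res:RCIVS} almost verbatim, replacing the inclusive equivalence of Theorem~\ref{equivalency} by its exclusive counterpart (the preceding corollary). First I would establish membership in NP. A certificate is a splitting sequence $G_0=G,G_1,\ldots,G_l$ with $l\le k$; since each exclusive split adds exactly one vertex, $|V(G_l)|=|V(G)|+l\le|V(G)|+k$, so the certificate has polynomial size. Checking it reduces to verifying, for every $i$, that $G_{i+1}$ arises from $G_i$ by a valid exclusive split (the two descendants have neighborhoods that partition the ancestor's neighborhood) and that the final graph $G_l$ is a disjoint union of stars, all of which is polynomial-time testable.

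For NP-hardness I would invoke the preceding corollary, which asserts that $(G,k)$ is a yes-instance of \textsc{Constellation-Exclusive Vertex Splitting} if and only if $(G,|V(G)|+k)$ is a yes-instance of \textsc{Weighted Star Decomposition}. Equivalently, deciding a \textsc{Weighted Star Decomposition} instance $(G,k')$ is the same as deciding the \textsc{Constellation-Exclusive Vertex Splitting} instance $(G,k'-|V(G)|)$. The crucial point is that the graph $G$ is identical on both sides of this reduction---only the target parameter is shifted by $|V(G)|$---so the map is computable in polynomial time and, in particular, preserves cubicity, planarity, and the maximum-degree-$3$ bound. Since the earlier Theorem shows that \textsc{Weighted Star Decomposition} is NP-complete on cubic graphs and on planar graphs of maximum degree $3$, this parameter-shift reduction transfers NP-hardness of \textsc{Constellation-Exclusive Vertex Splitting} to exactly those same graph classes, and together with membership in NP this yields NP-completeness on both.

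I do not expect a genuine obstacle here, since the substantive work is already encapsulated in the preceding corollary, whose proof recycles Lemma~\ref{Desplitting} (which performs an exclusive split) and Lemma~\ref{edgemerging} (which, as noted, also applies to exclusive splits). The only detail warranting a moment's care is that the shifted parameter $k'-|V(G)|$ must be a legitimate positive integer on the hard instances; for the cubic instances produced by the \textsc{Vertex Cover} reduction this value equals $m+k-n=n/2+k>0$, so no difficulty arises.
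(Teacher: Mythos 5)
Your proposal is correct and matches the paper's argument: the paper likewise derives this corollary from the exclusive analogue of Theorem~\ref{equivalency} (justified by noting Lemma~\ref{edgemerging} applies to exclusive splits and Lemma~\ref{Desplitting} already performs exclusive splits), combined with NP-hardness of \textsc{Weighted Star Decomposition} on the same graph classes and easy NP membership, exactly as in Theorem~\ref{Res:RCIVS}. Your added observations---that the reduction leaves the graph unchanged so cubicity, planarity, and the degree bound are preserved, and that the shifted parameter remains positive---are correct refinements of the same approach, not a different route.
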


\section{Cycle Graph-Inclusive/Exclusive Vertex Splitting}

We say a graph is a \emph{cycle graph} if it is a disjoint union of cycles. We now formally define the problems considered in this section.

\begin{center}
\fbox
    {\begin{minipage}{38.7em}\label{CGEVS}
       {\sc Cycle Graph-Exclusive Vertex Splitting (Cycle Graph-EVS)}\\
        \noindent{\bf Input:} An undirected graph $G=(V,E)$, and a positive integer $k$.\\
    \noindent{\bf Question:} Is there a sequence of at most $k$ exclusive vertex splittings that transforms $G$ into a cycle graph?
    \end{minipage} }\\
\end{center}

\begin{center}
\fbox
    {\begin{minipage}{38.7em}\label{CGIVS}
       {\sc Cycle Graph-Inclusive Vertex Splitting (Cycle Graph-IVS)}\\
        \noindent{\bf Input:} An undirected graph $G=(V,E)$, and a positive integer $k$.\\
    \noindent{\bf Question:} Is there a sequence of at most $k$ inclusive vertex splittings that transforms $G$ into a cycle graph?
    \end{minipage} }\\
\end{center}

\begin{definition}
    A cycle decomposition of a graph $G$ is a set of cycles in $G$ such that every edge of $G$ belongs to exactly one cycle.
\end{definition}

We note that if $\Gamma = \{C_3, C_4, C_5,\ldots \}$, where $C_l$ denotes a cycle of length $l$ for all $l\geq 3$, then a $\Gamma$-decomposition and a cycle decomposition are the same notions for a graph.

The following known result will be relevant to our discussion:

\begin{lemma} (\cite{Euler1741,Petersen1891})\label{evendegree}
    A graph $G$ has a cycle decomposition if and only if every vertex of $G$ has even degree.
\end{lemma}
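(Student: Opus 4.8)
The plan is to prove the classical characterization due to Euler and Petersen: a graph $G$ admits a cycle decomposition if and only if every vertex has even degree. I would prove the two directions separately, with the forward (necessity) direction being nearly immediate and the reverse (sufficiency) direction carrying the real content.

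For the forward direction, suppose $G$ has a cycle decomposition $\{C_1, \ldots, C_l\}$, so that $E(C_1), \ldots, E(C_l)$ partition $E(G)$. Fix any vertex $v$. Every edge incident to $v$ lies in exactly one cycle $C_i$, and within each such $C_i$ the vertex $v$ has degree exactly $2$ (a cycle is $2$-regular). Hence the edges at $v$ split into groups of size $2$ according to which cycle they belong to, which shows $d_G(v)$ is a sum of $2$'s and is therefore even. Since $v$ was arbitrary, every vertex has even degree.

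For the reverse direction, I would assume every vertex of $G$ has even degree and argue by induction on the number of edges $|E(G)|$. The base case $|E(G)| = 0$ is trivial (the empty decomposition works). For the inductive step, since $G$ has no isolated vertices relevant to the edge set, pick any edge and follow a trail: because every vertex has even degree, whenever a walk enters a vertex along an unused edge it can always leave along another unused edge, so a walk starting from any non-isolated vertex cannot get stuck except by returning to its start; finiteness then guarantees the walk eventually revisits a vertex, yielding a cycle $C$. Remove the edges of $C$ from $G$ to obtain $G'$. Since $C$ is $2$-regular, deleting $E(C)$ decreases the degree of each vertex on $C$ by exactly $2$ and leaves all other degrees unchanged, so $G'$ again has all even degrees and strictly fewer edges. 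By the induction hypothesis $G'$ has a cycle decomposition, and appending $C$ yields a cycle decomposition of $G$.

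The main obstacle, and the only place requiring care, is the extraction of a cycle in the inductive step: one must argue rigorously that the even-degree condition forces any maximal trail to close into a cycle rather than terminate at a dead end. The clean way to do this is to take a maximal walk using distinct edges; the even-degree property ensures that the terminal vertex has an even number of incident walk-edges, so the walk can only be maximal by returning to a previously visited vertex, at which point the portion of the walk between the two visits forms a cycle. Everything else is bookkeeping on degrees, which is routine.
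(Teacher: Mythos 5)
The paper does not prove this lemma at all: it is invoked as a classical result with citations to Euler and Petersen, so there is no in-paper proof to compare against. Your argument is the standard textbook proof of exactly that classical fact (necessity since each cycle contributes $2$ to the degree of every vertex it meets, sufficiency by extracting a cycle via a maximal trail and inducting on $|E(G)|$), and it is correct. One small wording slip: a maximal trail ending at a vertex $w$ different from its start uses an \emph{odd} number of edge-ends at $w$ (two per intermediate visit plus one for the final arrival), and it is this oddness, against the evenness of $d_G(w)$, that contradicts maximality and forces the trail to return to a previously visited vertex; with that parity statement corrected, the proof is complete.
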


\begin{lemma}\label{alleven}
    If $(G,k)$ is a yes-instance of {\sc Cycle Graph-Exclusive Vertex Splitting}, then every vertex of $G$ has even degree.
\end{lemma}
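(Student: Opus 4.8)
The plan is to exploit the defining feature of exclusive splitting: when a vertex $v$ is split exclusively into $v_1$ and $v_2$, the neighborhoods partition $N(v)$, so $d(v_1) + d(v_2) = d(v)$. Thus an exclusive split \emph{preserves the total degree} carried by the copies of the split vertex. I would also record the companion observation that splitting a vertex $x$ leaves the degree of every \emph{other} vertex unchanged: an edge $xy$ is merely reassigned to $x_1y$ or $x_2y$, so $y$ keeps the same number of incident edges. Note also that $v_1$ and $v_2$ are never adjacent, so these observations do not interfere.

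Let $G = G_0, G_1, \ldots, G_l$ (with $l \le k$) be a splitting sequence witnessing the yes-instance, where $G_l$ is a cycle graph. For a vertex $v \in V(G)$ and each $i$, let $D_i(v)$ denote the set of descendants of $v$ present in $G_i$. The core step is to prove, by induction on $i$, the invariant
\begin{equation*}
\sum_{w \in D_i(v)} d_{G_i}(w) = d_G(v).
\end{equation*}
The base case $i = 0$ is immediate since $D_0(v) = \{v\}$. For the inductive step, suppose $G_{i+1}$ is obtained from $G_i$ by exclusively splitting a vertex $x$. If $x$ is a descendant of $v$, then $D_{i+1}(v) = (D_i(v) \setminus \{x\}) \cup \{x_1, x_2\}$, and the degree-sum preservation $d_{G_{i+1}}(x_1) + d_{G_{i+1}}(x_2) = d_{G_i}(x)$ together with the fact that the degrees of the remaining descendants are unaffected yields the invariant. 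If $x$ is not a descendant of $v$, then $D_{i+1}(v) = D_i(v)$ and no degree of a descendant of $v$ changes, so the invariant persists.

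Applying the invariant at stage $l$ and using that $G_l$ is a cycle graph, every $w \in D_l(v)$ has $d_{G_l}(w) = 2$, whence $d_G(v) = 2\,|D_l(v)|$ is even. Since $v$ was arbitrary, every vertex of $G$ has even degree.

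The bookkeeping around descendant sets is the only delicate point; the real content is the degree-sum identity for a single exclusive split, which is exactly where exclusivity (disjoint neighborhoods) is essential. The same argument fails for inclusive splitting, where $d(v_1) + d(v_2) \ge d(v)$ may be strict, and this is consistent with the lemma being stated only for the exclusive variant.
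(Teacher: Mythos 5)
Your proposal is correct and follows essentially the same route as the paper: the paper's proof rests on exactly the identity $d_G(v)=\sum_{v'} d_{G_l}(v')$ over the descendants $v'$ of $v$ in the final cycle graph, which it states directly as a consequence of exclusivity, then concludes evenness from all degrees in $G_l$ being $2$. Your only addition is to prove that identity carefully by induction along the splitting sequence, which the paper leaves implicit.
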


\proof  Let $G_0,\ldots , G_l$ be a splitting sequence such that $G_0 = G$ and $G_l$ is a cycle graph. Since we perform only exclusive vertex splitting, we have 

$$d_G(v)= \sum_{v': v' \text{ is a descendant of  $v$ in } G_l} d_{G_l}(v').$$

Here, we say $v'$ is a descendant of $v$ if $v'$ is obtained by splitting $v$.

Since $G_l$ is a cycle graph, every vertex in $G_l$ has even degree. Therefore, every vertex in $G$ has even degree.\qed \\

By applying Lemma \ref{alleven}, we obtain the following corollary.

\begin{corollary}\label{corollary}
If $G$ has a vertex with odd degree, then $(G,k)$ is a no-instance of {\sc Cycle Graph-Exclusive Vertex Splitting} for any $k\geq0$.
\end{corollary}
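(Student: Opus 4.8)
The plan is to obtain this corollary as an immediate consequence of Lemma \ref{alleven} via contraposition, since the two statements are logically dual. Lemma \ref{alleven} asserts that every yes-instance forces all vertices to have even degree; contrapositively, the presence of any odd-degree vertex precludes the instance from being a yes-instance, and because the problem is a decision problem this is exactly the assertion that the instance is a no-instance.

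Concretely, I would first fix a graph $G$ containing a vertex of odd degree and an arbitrary $k \geq 0$. Suppose, for contradiction, that $(G,k)$ were a yes-instance of \textsc{Cycle Graph-Exclusive Vertex Splitting}. Then Lemma \ref{alleven} applies and guarantees that every vertex of $G$ has even degree, directly contradicting our assumption that $G$ has a vertex of odd degree. Hence $(G,k)$ cannot be a yes-instance, and is therefore a no-instance.

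The one point worth flagging is the quantification over $k$: I would emphasize that the conclusion holds uniformly for \emph{every} $k \geq 0$ because the degree-parity obstruction supplied by Lemma \ref{alleven} makes no reference to the budget $k$ at all. The invariant underlying Lemma \ref{alleven}, namely that exclusive splitting distributes the degree of a vertex among its descendants so that $d_G(v) = \sum_{v'} d_{G_l}(v')$ over descendants $v'$ of $v$, depends only on the fact that the final graph $G_l$ is a cycle graph (all degrees even), not on how many splits were performed. Thus no separate argument is needed for different values of $k$.

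There is essentially no substantive obstacle here: the content is entirely carried by Lemma \ref{alleven}, and the corollary is a one-line logical restatement. The only care required is to phrase the contraposition correctly and to make explicit that the budget $k$ plays no role, which I would state in a single clarifying sentence rather than with any additional computation.
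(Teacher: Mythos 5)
Your proposal is correct and takes essentially the same route as the paper, which states the corollary as an immediate consequence of Lemma \ref{alleven} (i.e., its contrapositive) without any further argument. Your clarifying remark that the budget $k$ plays no role is accurate, since the degree-sum invariant in Lemma \ref{alleven} depends only on the final graph being a cycle graph, not on the number of splits performed.
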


\begin{theorem}\label{maintheorem}
   Let $G$ be a graph where every vertex has even degree. then $(G,k)$ is a yes-instance of {\sc Cycle Graph-Exclusive Vertex Splitting} if and only if $k\geq m-n$, where $|V(G)|=n$ and $|E(G)|=m$.
\end{theorem}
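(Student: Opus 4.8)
The plan is to prove both directions by exploiting two invariants of the exclusive splitting process: a vertex split preserves the number of edges while increasing the number of vertices by exactly one, and a cycle graph on $N$ vertices has exactly $N$ edges. Together these pin down the number of splits precisely.

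For the forward direction, suppose $(G,k)$ is a yes-instance witnessed by a splitting sequence $G_0 = G, \ldots, G_l$ with $l \leq k$ and $G_l$ a cycle graph. Each split adds one vertex and preserves the edge count, so $|V(G_l)| = n + l$ while $|E(G_l)| = m$. Since a cycle graph satisfies $|V| = |E|$, this forces $n + l = m$, hence $l = m - n$ and therefore $k \geq l = m - n$. (This direction uses only counting and does not invoke the even-degree hypothesis.)

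For the backward direction I would construct an explicit splitting sequence of length exactly $m - n \leq k$, mirroring the argument of Theorem \ref{equivalency}. Since every vertex of $G$ has even degree, Lemma \ref{evendegree} guarantees a cycle decomposition $\mathcal{P}$ of $G$; take $\Gamma = \{C_3, C_4, \ldots\}$. The first step is to compute $\texttt{wgt}_G(\mathcal{P})$: because each cycle $C_i \in \mathcal{P}$ satisfies $|V(C_i)| = |E(C_i)|$ and the edge sets partition $E(G)$, the weight equals $\sum_i |V(C_i)| = \sum_i |E(C_i)| = m = |V(G)| + (m - n)$. Setting $\alpha = m - n$, the hypothesis of Lemma \ref{Desplitting} holds with equality. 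I then apply Lemma \ref{Desplitting} iteratively: starting from $G_0 = G$, whenever $G_j$ is not yet a cycle graph the lemma supplies an exclusive split producing $G_{j+1}$ with a cycle decomposition $\mathcal{P}_{j+1}$ of weight at most $|V(G_{j+1})| + (\alpha - 1)$ and with no isolated vertices. After $j$ steps one has $|V(G_j)| = n + j$ and $\texttt{wgt}(\mathcal{P}_j) \leq |V(G_j)| + (m - n - j)$. At $j = m - n$ the slack reaches $0$, giving $\texttt{wgt}(\mathcal{P}_{m-n}) \leq |V(G_{m-n})|$; since $G_{m-n}$ has no isolated vertices every vertex lies in at least one cycle, forcing $\texttt{wgt}(\mathcal{P}_{m-n}) = |V(G_{m-n})|$, so each vertex lies in exactly one cycle and $G_{m-n}$ is a disjoint union of cycles, reached in $m - n \leq k$ exclusive splits.

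The main obstacle is the backward direction, specifically confirming that the iteration neither stalls nor terminates prematurely, so that it delivers a cycle graph in exactly $m - n$ steps. The key consistency check reuses the forward counting argument: any intermediate $G_j$ with $j < m - n$ has $n + j < m$ vertices but still $m$ edges, so it cannot be a cycle graph, which means the ``either'' branch of Lemma \ref{Desplitting} never fires early and the ``or'' branch always supplies a vertex to split. Thus the invariant $\texttt{wgt}(\mathcal{P}_j) \leq |V(G_j)| + (m - n - j)$ is maintained all the way down to $\alpha = 0$, where the no-isolated-vertices condition upgrades the inequality to equality and yields the desired cycle graph. The cycle decomposition of fixed weight $m$ plays exactly the role the star decomposition played in the constellation case.
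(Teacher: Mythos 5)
Your proposal is correct and follows essentially the same route as the paper: the forward direction is the same vertex/edge counting (the paper phrases it as a contradiction, you argue directly that exactly $m-n$ splits are forced), and the backward direction likewise builds a cycle decomposition of weight $m$ (the paper computes this via $\sum_v d(v)/2$, you via $|V(C_i)|=|E(C_i)|$) and iterates Lemma \ref{Desplitting} down to weight $|V(G_{m-n})|$. Your extra check that the iteration cannot terminate prematurely is a small detail the paper leaves implicit, but it does not change the argument.
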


\proof For the forward direction, suppose $(G,k)$ is a yes-instance. For the sake of contradiction,  assume that $k<m-n$. Let $G_0,\dots , G_t$ be a splitting sequence such that $G_0 = G$, $G_t$ is a cycle graph and $t\leq k$. So we have $|V(G_t)|=|V(G)|+t \leq n+k < n+(m-n)=m$. Since $G_t$ is a cycle graph, it satisfies $|E(G_t)|=|V(G_t)|<m$. But since we are doing exclusive vertex splitting, the number of edges remains constant throughout the sequence, so  $|E(G)|=|E(G_t)|=m$, which is a contradiction. Hence $k \geq m-n$.

For the backward direction, let $k\geq m-n$. By assumption, every vertex of $G$ has even degree. By Lemma \ref{evendegree}, we have a cycle decomposition, say $\mathcal{P}$, of $G$. Note that every cycle contributes $2$ to the degree of the vertices it passes through. Hence, every vertex with degree $d(v)$ will be part of exactly $\frac{d(v)}{2}$ elements of $\mathcal{P}$. Therefore, we have $\texttt{wgt}(\mathcal{P})= \sum_{v \in V(G)} \frac{d(v)}{2}=m$. By iteratively applying Lemma \ref{Desplitting} $m-n$ number of times, we can obtain the sequences $G_0, \ldots, G_{m-n}$ and $\mathcal{P}_0, \ldots, \mathcal{P}_{m-n}$.

We claim that $G_{m-n}$ is a disjoint union of cycles (i.e., a cycle graph). As a consequence of the applications of Lemma \ref{Desplitting}, we get $\texttt{wgt}(\mathcal{P}_{m-n}) \leq |V(G_{m-n})|$. By considering the fact that for each vertex $v \in V\left(G_{m-n}\right)$ there exists $H_i \in \mathcal{P}_{m-n}$ with $v \in V(H_i)$, we derive $\texttt{wgt}(\mathcal{P}_{m-n})\geq |V(G_{m-n})|$. Thus, we have that \texttt{wgt}$(\mathcal{P}_{m-n})=|V(G_{m-n})|$ and it follows that $G_{m-n}$ is a disjoint union of cycles. Hence $(G,m-n)$ is a yes-instance of {\sc Cycle Graph-Vertex Splitting}. So it is obvious that $(G,k)$ is a yes-instance. \qed \\

We note that Corollary \ref{corollary} deals with graphs that have at least one vertex with odd degree, and Theorem \ref{maintheorem} deals with those graphs where all vertices have even degrees. Therefore, by combining Corollary \ref{corollary} and Theorem \ref{maintheorem}, we obtain the following final result.

\begin{theorem}\label{Res:CGEVS}
The {\sc Cycle Graph-Exclusive Vertex Splitting} is solvable in polynomial time.
\end{theorem}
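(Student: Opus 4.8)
The plan is to exhibit a simple decision procedure and to argue that each of its steps runs in time polynomial in the size of $G$. The crucial observation is that Corollary~\ref{corollary} and Theorem~\ref{maintheorem} already provide a complete dichotomy that depends only on the degree sequence of $G$ together with the two integers $n=|V(G)|$ and $m=|E(G)|$. Consequently, no new combinatorial insight is required here; the entire task reduces to packaging the two preceding results into an algorithm and verifying its running time.

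First I would compute the degree $d_G(v)$ of every vertex $v \in V(G)$, which takes $O(n+m)$ time, and then test whether all these degrees are even. If at least one vertex has odd degree, then by Corollary~\ref{corollary} the pair $(G,k)$ is a no-instance for every $k \geq 0$, so the procedure outputs \emph{no}. Otherwise every vertex of $G$ has even degree, and Theorem~\ref{maintheorem} applies directly: $(G,k)$ is a yes-instance if and only if $k \geq m-n$. In this case the procedure computes $m-n$ (which is immediate, since $m$ and $n$ are already on hand) and outputs \emph{yes} exactly when $k \geq m-n$, and \emph{no} otherwise.

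Correctness is immediate, since the two cases, namely the existence of an odd-degree vertex versus all degrees being even, are exhaustive and mutually exclusive, and in each case the answer is pinned down by the cited result. The total running time is $O(n+m)$, hence polynomial, which establishes the claim. There is essentially no obstacle to overcome at this stage: the only points to check are that the dichotomy is genuinely exhaustive, which it plainly is, and that the comparison $k \geq m-n$ can be performed in polynomial time, which is trivial. All of the real work has already been carried out in Lemma~\ref{alleven}, Corollary~\ref{corollary}, and Theorem~\ref{maintheorem}, so this final statement is simply the assembly of those pieces into a polynomial-time algorithm.
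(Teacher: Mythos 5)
Your proposal is correct and follows exactly the paper's route: the paper likewise obtains this theorem by combining Corollary~\ref{corollary} (an odd-degree vertex forces a no-instance) with Theorem~\ref{maintheorem} (otherwise yes if and only if $k \geq m-n$), so the algorithm is just a parity check on degrees followed by the comparison $k \geq m-n$. Your explicit $O(n+m)$ running-time accounting is a small addition the paper leaves implicit, but the argument is the same.
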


We note that there is no way to obtain a cycle graph from an isolated edge by performing a sequence of exclusive vertex splits. However, the situation changes if we allow inclusive vertex splits. To see this, consider an isolated edge $(x,y)$. By first splitting $x$ inclusively, where both $y_1$ and $y_2$  are adjacent to $x_1$ and $x_2$, we obtain a cycle of length $4$.

We show that \textsc{Cycle Graph-Inclusive Vertex Splitting} is also polynomial-time solvable. However, the proof technique differs from the one used for \textsc{Cycle Graph-Exclusive Vertex Splitting}. We begin with a simple yet crucial observation that will play an important role in the proof.

\begin{observation}\label{splitnumber}
\label{min edges}
    Let $G$ be a graph and $G'$ be a cycle graph obtained using a sequence of inclusive vertex splits from $G$. Then the number of splits required to obtain $G'$ from $G$ is exactly $|E(G')|-|V(G)|$. 
\end{observation}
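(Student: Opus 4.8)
The plan is to reduce the statement to two elementary counting facts. First, as noted earlier in the paper, every vertex split---inclusive or exclusive---increases the number of vertices by exactly one. Hence if $s$ denotes the number of splits used to transform $G$ into $G'$, then $|V(G')| = |V(G)| + s$, that is, $s = |V(G')| - |V(G)|$. This already expresses the number of splits purely in terms of vertex counts, with no reference yet to edges.

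Second, I would exploit the structure of the target class. A cycle graph is a disjoint union of cycles, so every vertex of $G'$ has degree exactly $2$. By the handshake lemma, $2|E(G')| = \sum_{v \in V(G')} d_{G'}(v) = 2|V(G')|$, and therefore $|E(G')| = |V(G')|$. Thus a cycle graph always has as many edges as vertices.

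Combining the two identities yields $s = |V(G')| - |V(G)| = |E(G')| - |V(G)|$, which is exactly the claimed formula. I expect no genuine obstacle here: the result is a direct substitution once one records that a cycle graph satisfies $|E| = |V|$ and that each split adds precisely one vertex. The only subtlety worth stating explicitly is that the \emph{per-split} increment in the edge count may vary (an inclusive split adds $|N(v_1) \cap N(v_2)|$ edges, which can be positive), so one must \emph{not} try to count splits by tracking edge increments directly; instead the argument routes entirely through the vertex count, where each split contributes a uniform $+1$ regardless of the split type or the edge reassignment chosen.
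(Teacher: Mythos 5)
Your proof is correct and follows essentially the same route as the paper: each split adds exactly one vertex, so the number of splits is $|V(G')|-|V(G)|$, and since a cycle graph satisfies $|E(G')|=|V(G')|$ the claimed formula follows. Your extra remark on why one must not count edge increments is a sensible clarification but does not change the argument.
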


\begin{proof}
    Since $G'$ is a cycle graph, the number of vertices and edges of $G'$ is the same. As each of the splits introduces exactly one new vertex, the number of inclusive vertex splits required to obtain $G'$ from $G$ is exactly $|V(G')|-|V(G)|$, which is the same as $|E(G')|-|V(G)|$.
\end{proof}

\begin{lemma}\label{openingwalk}
Let $G=(V, E)$ be a graph and let $W$ be a closed walk that visits all vertices and covers all the edges in $G$. If $|W|$ denotes the length of the walk $W$, then there exists a sequence of exactly $|W|-|V(G)|$ inclusive vertex splits such that the resulting graph is a cycle graph (a single cycle). 
\end{lemma}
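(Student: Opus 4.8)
The plan is to show that $G$ can be transformed, by a sequence of inclusive splits, into the single cycle $C_{|W|}$ obtained by ``unfolding'' the closed walk $W$; the count will then be automatic. Concretely, write $W = (w_0, w_1, \ldots, w_{\ell-1}, w_\ell = w_0)$ with $\ell = |W|$, and think of $W$ as a cyclic sequence of $\ell$ \emph{positions} together with a projection $\pi$ sending position $i$ to the vertex $w_i \in V(G)$. For each $v \in V(G)$ let $m(v)$ be the number of positions with $\pi(i)=v$; since $W$ visits every vertex, $m(v)\ge 1$, and $\sum_{v} m(v) = \ell$. The target $C_\ell$ is the cycle whose vertices are the $\ell$ positions in cyclic order, so that each vertex $v$ gets ``unfolded'' into its $m(v)$ occurrences. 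Because each inclusive split creates exactly one new vertex and $C_\ell$ satisfies $|V(C_\ell)| = |E(C_\ell)| = \ell$, Observation \ref{splitnumber} shows that \emph{any} inclusive-splitting sequence from $G$ to $C_\ell$ has length exactly $\ell - |V(G)| = |W| - |V(G)|$. Hence it suffices to prove that such a sequence \emph{exists}.

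I would prove existence by induction on $\ell - |V(G)|$. In the base case $\ell = |V(G)|$ every vertex is visited exactly once, so $W$ is a Hamiltonian cycle; since $W$ also covers all edges, $G$ equals this cycle and no split is needed. For the inductive step $\ell > |V(G)|$, the pigeonhole principle gives a vertex $v$ with $m(v) \ge 2$; fix one position $j$ with $w_j = v$. Define an inclusive split of $v$ into $v_1, v_2$ by setting $N(v_2) := \{w_{j-1}, w_{j+1}\}$ and letting $N(v_1)$ consist of every neighbour $u$ of $v$ such that the edge $vu$ is traversed by $W$ at some occurrence of $v$ other than position $j$. Let $\tilde W$ be the walk obtained from $W$ by relabelling the single occurrence at position $j$ from $v$ to $v_2$ (and every other occurrence of $v$ to $v_1$), and let $\tilde G$ be the resulting split graph. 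Then $\tilde G$ has $|V(G)|+1$ vertices, $\tilde W$ is a closed walk of the same length $\ell$ that visits all vertices of $\tilde G$ and covers all its edges, and $\ell - |V(\tilde G)| = (\ell - |V(G)|) - 1$, so the induction hypothesis applies and yields a splitting sequence from $\tilde G$ to the single cycle $C_\ell$; prepending our split finishes the step.

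Two verification points carry the real content, and I expect the second to be the main obstacle. First, the split must be a legal inclusive split, i.e.\ $N(v_1)\cup N(v_2) = N(v)$: this is exactly where the hypothesis that $W$ covers \emph{all} edges is used, since every neighbour $u \in N(v)$ has the edge $vu$ traversed at \emph{some} occurrence of $v$, and that occurrence is either $j$ (placing $u$ in $N(v_2)$) or another one (placing $u$ in $N(v_1)$). Note that $N(v_1)$ and $N(v_2)$ genuinely overlap whenever an edge at $v$ is used by two different occurrences of the walk, which is precisely why \emph{inclusive} (rather than exclusive) splitting is needed and why the edge count may rise from $|E(G)|$ toward $\ell$. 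Second, one must check that $\tilde W$ remains a valid closed walk covering all edges of $\tilde G$; the only delicate case is the degenerate ``back-and-forth'' passage $w_{j-1}=w_{j+1}$, where $v_2$ momentarily becomes a degree-one vertex. This is harmless: the induction only requires $\tilde W$ to cover the edges and visit the vertices of $\tilde G$, not that intermediate graphs be cycle graphs, and the two traversals of the edge at $v_2$ become two distinct cycle edges once the repeated endpoint is itself unfolded in later steps. Handling these degenerate passages carefully, together with the bookkeeping that each step decreases $\ell - |V(G)|$ by exactly one while keeping $\tilde G$ simple, is the crux of the argument.
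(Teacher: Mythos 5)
Your proof is correct, and while it rests on the same underlying idea as the paper's --- unfolding the closed walk $W$ into the cycle $C_{|W|}$ by repeatedly using inclusive splits to peel one occurrence of a multiply-visited vertex away from the rest --- the organization is genuinely different. The paper's proof (Algorithm~\ref{Algo: Constructing a cycle graph from a walk}) is a single sequential pass along $W$: it builds an auxiliary multigraph $G_M$ recording edge multiplicities, maintains a \texttt{prev} pointer and a black/white coloring, and at each visit splits any vertex of multidegree at least $4$ into a white copy (wired to \texttt{prev} and the next walk vertex) and a black copy carrying the remainder; the count $|W|-|V(G)|$ then falls out of the edge/degree accounting at the end. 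You instead induct on the measure $|W|-|V(G)|$, performing one split per inductive step, explicitly verifying the invariant that the relabelled walk $\tilde W$ remains a closed walk of unchanged length that visits all vertices and covers all edges of the split graph, and you get the split count for free from Observation~\ref{splitnumber} rather than from the construction. Your version buys rigor and modularity: the legality of each split ($N(v_1)\cup N(v_2)=N(v)$, via edge coverage by $W$) and the preservation of the walk invariant are checked locally, whereas the paper's one-pass description leaves the correctness of the coloring/\texttt{prev} bookkeeping (and the interaction between later visits and earlier descendants) largely to the reader; the paper's version, in exchange, is an explicit linear-time procedure rather than an existence argument, which matters since the lemma is later invoked algorithmically in Theorem~\ref{Res:CGIVS}. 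Your treatment of the back-and-forth passage $w_{j-1}=w_{j+1}$ (tolerating a temporary degree-one vertex, resolved when the repeated neighbor is itself unfolded) is sound and is in fact handled more explicitly than in the paper. One caveat you share with the paper: when $|W|=2$ (a single edge traversed back and forth), the ``cycle'' $C_2$ is not a simple graph and your base-case claim that $W$ is a Hamiltonian cycle fails --- but this is a degeneracy of the lemma statement itself (the paper even remarks that an isolated edge needs two splits to reach $C_4$), so the lemma implicitly assumes $|W|\geq 3$, under which both proofs go through.
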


\begin{proof}
   Let $W=(w_0,w_1\cdots w_{l-1},w_l)$ be a closed walk in a graph $G$, where $w_l=w_0$. Note that the walk may traverse the same edge multiple times.
   
   The following provides a procedure for generating a cycle graph using a sequence of inclusive vertex splits. We start our procedure by coloring all the vertices of the graph $G$ black. During the procedure, for a vertex $v \in V(G)$,  $v^b$ denotes the copy of $v$ that is black and $v^w$ denotes the copy of $v$ that is white. 

   We construct the multigraph $G_M=(E_M, V)$ from $G$ where the number of parallel edges between two vertices $u,v \in V(G_M)$ is the number of times the edge $(u,v)$ has appeared in the walk $W$. Note that we could have done the procedure we mentioned now, even without constructing this multigraph. The multigraph essentially helps us to achieve in exclusive splits what we wanted to achieve through inclusive. For the inclusive version, one might replace the step of given parallel edges to two split vertices with the step of making both the split vertices adjacent to the same vertex. Writing through the inclusive vertex lens is easier, so we stuck with it. We store a temporary variable \texttt{prev}, and we initialize this to $w_{l-1}$.
   
   We proceed the walk sequentially from $w_0$ to $w_{l-1}$. When the vertex $w_j$ is visited, we perform the following
   \begin{enumerate}
        \item If $j = l-1$ or the degree of the current vertex \( w_j \) in \( G_M \) is exactly 2, we just change the color of the vertex from black to white. \texttt{prev} is reassigned to this white vertex.
       \item If $j \neq l-1$ and the degree of \( w_j \) in \( G_M \) is at least 4, we exclusively split ${w_j}$ into two vertices one white ${w^w_j}$, another black ${w^b_j}$ where the white copy is only adjacent to \texttt{prev} and ${w^b_{j+1}}$, we reassign \texttt{prev} as the white vertex just created. What this also means is, in our procedure below, if a vertex in the walk $W$, say $w_i$ is the same as $w_j$ for $i<j$, when we reach $w_j$, it would be a black descendant of $w_i$,       
   \end{enumerate}

The above steps are mentioned as algorithm~\ref{Algo: Constructing a cycle graph from a walk}. Call the modified graph returned by this algorithm $G'$. One can check that the number of edges in $ G'$ is $ |W|$, and by construction, the resultant connected graph has all vertices as degree 2 vertices. Thus, the final graph is a collection of cycles generated by a sequence of exclusive vertex splits on $G$, thus ensuring that the number of splits is $|W|-|V(G)|$.

\begin{algorithm}
\caption{Constructing a cycle graph from a walk $W$}
\label{Algo: Constructing a cycle graph from a walk}
\begin{algorithmic}
\State \textbf{Input:} Closed walk $W=(w_0,w_1,\dots,w_{l-1},w_l)$ in graph $G$
\State \textbf{Initialize:} Color all vertices of $G$ black
\State Construct multigraph $G_M=(V,E_M)$ where multiplicity of edge $\{u,v\}$ equals its number of occurrences in $W$
\State $\texttt{prev} \gets w_{l-1}$
\State $j \gets 0$
\While{walk not exhausted}
    \If{$j = l-1$ \textbf{ or } $\deg(w_j) = 2$}
        \State Change color of $w_j$ from black to white
        \State $prev \gets$ $w_j$
    \ElsIf{$j \neq l-1$ \textbf{ and } $\deg(w_j) \geq 4$}
        \State Do an exclusively vertex splitting of $w_{j}$ into a white and black copy, $w_j^b$ and $w_j^w$ respectively, making $w_j^w$ adjacent to $\texttt{prev}$ and $w^b_{(j+1) \bmod l}$, thus making $w_j^b$ adjacent to the remaining neighbors of $w_j$.
        \State $\texttt{prev} \gets$ $w_j^w$
    \EndIf
    \State $j \gets (j+1) \bmod l$
\EndWhile
\State \textbf{Output:} Graph $G'$ with $|W|$ edges, which is a cycle graph.
\end{algorithmic}
\end{algorithm}
\end{proof}

First, we observe that if a certain number of inclusive vertex splits is enough to get a cycle graph from a graph, then the same number of inclusive vertex splits is also enough to obtain a single cycle.

\begin{lemma}\label{equalsplits}
    Given a graph $G = (V, E)$. Then, there exists a sequence of $k$ inclusive vertex splits that transforms $G$ into a disjoint union of cycles if and only if there exists a sequence of $k$ inclusive vertex splits that transforms $G$ into a single cycle. 
\end{lemma}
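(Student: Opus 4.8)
The plan is to dispatch the backward implication in one line and to concentrate all the work on the forward one, which I would reduce to Lemma~\ref{openingwalk} through a closed walk extracted from the given splitting sequence. For the backward direction there is nothing to do: a single cycle is itself a disjoint union of cycles, so any sequence of $k$ inclusive splits turning $G$ into a single cycle already turns $G$ into a disjoint union of cycles.

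For the forward direction, suppose some sequence of $k$ inclusive splits turns $G$ into a cycle graph $G'$ with cycle components $C_1,\dots,C_c$. First I would set up the ancestor correspondence: since an inclusive split of $u$ into $u_1,u_2$ keeps every neighbor of $u$ adjacent to $u_1$ or $u_2$ and never creates an edge between $u_1$ and $u_2$, each edge of $G'$ has a well-defined ancestor edge of $G$ with two distinct endpoints, every edge of $G$ has at least one descendant edge in $G'$, and every non-isolated vertex of $G$ has a descendant in $G'$. Projecting each cycle $C_i$ onto the ancestors of its vertices then yields a closed walk $W_i$ in $G$, and the family $W_1,\dots,W_c$ jointly visits every vertex of $G$, covers every edge of $G$, and has total length $\sum_i |W_i| = \sum_i |E(C_i)| = |E(G')| = |V(G)| + k$, where the last equality is Observation~\ref{splitnumber}.

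It remains to merge the $c$ closed walks into one closed walk of the same total length, after which Lemma~\ref{openingwalk} applied to that walk $W$ produces a single cycle using exactly $|W|-|V(G)| = k$ inclusive splits. To merge, I would show that the walks are tied together through shared vertices: were they to split into two nonempty groups spanning disjoint vertex sets, connectivity of $G$ would force an edge between the two sets, and the walk covering that edge would have to lie in both groups, a contradiction. Hence two walks meeting at a common vertex can be spliced at that vertex into a single closed walk of combined length, and repeating this reduces the family to one walk $W$ of length $|V(G)|+k$ while preserving all coverage and visiting properties.

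The main obstacle is precisely this splicing step and the role it assigns to connectivity: a vertex split can never rejoin two connected components, so the forward direction genuinely needs $G$ to be connected (for a disconnected $G$, such as two disjoint triangles, a disjoint union of cycles is reachable while a single cycle is not). I would therefore run the forward argument under the standing assumption that $G$ is connected, invoking the lemma componentwise where needed; the remaining points, namely that the ancestor projection is a bona fide closed walk and that splicing alters neither the total length nor the covering and visiting properties, are routine verifications on the splitting sequence.
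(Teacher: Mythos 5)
Your proof is correct and follows the same skeleton as the paper's: the backward direction is dismissed as trivial; the forward direction projects the cycles of $G'$ to closed walks in $G$ of total length $|E(G')|=|V(G)|+k$ (via Observation~\ref{splitnumber}), merges them into one closed walk covering all edges and visiting all vertices, and finishes with Lemma~\ref{openingwalk}. The only technical divergence is the merging device: the paper duplicates edges to form a multigraph $\tilde{G}$ in which the walks become edge-disjoint trails, notes that all degrees are even, and extracts an Eulerian circuit, whereas you splice walks pairwise at shared vertices; these are at bottom the same argument, since Hierholzer-style splicing is exactly how one proves the Eulerian circuit exists. Where you genuinely add value is the connectivity caveat: you are right that the statement as printed fails for disconnected graphs (two disjoint triangles are already a disjoint union of cycles with $k=0$, yet no sequence of splits yields a single cycle, since splitting never merges components), and the paper's step asserting that $\tilde{G}$ is Eulerian silently uses that $\tilde{G}$ is connected, which holds only when $G$ is (every edge of $G$ has at least one descendant edge in $G'$, so $\tilde{G}$ contains a copy of $G$). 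Your repair --- assume $G$ connected and apply the lemma componentwise --- is the right one and is consistent with the downstream reduction to {\sc Chinese Postman}, which likewise presupposes connectivity. Your explicit verification of the ancestor correspondence (the two descendants of a split vertex are never adjacent, so each edge of $G'$ projects to a genuine edge of $G$, and the cycle projections are bona fide closed walks) also fills in a step the paper merely asserts.
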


\begin{proof}
    For the forward direction, let us assume that there exists a sequence of $k$ inclusive vertex splits that transforms $G$ into a disjoint union of cycles, say $G'$.
    
    By Observation \ref{splitnumber}, $|E(G')|-|V(G)|=k$.
    If $G'$ is a single cycle, then we are done.
    So, we assume that $G$ is a disjoint union of at least two cycles, say $C_1,C_2,\ldots,C_r$.
    Each cycle in $G'$ corresponds to a closed walk in the graph $G$. For each $i \in [r]$, let $T_i$ denote the closed walk corresponding to the cycle $C_i$ in $G$.
    These walks may traverse the same edge multiple times, and different walks may share edges or vertices.
    
    Construct a new multigraph $\Tilde{G}=(V,\Tilde{E})$ as follows: for every edge $e \in E(G)$ and for every occurrence of $e$ as an edge in some walk $T_i$, include a distinct parallel copy of $e$ in $\Tilde{E}$.
    Then each $T_i$ becomes a trail in $\Tilde{G}$ and the trails corresponding to different $T_i$ are edge-disjoint in $\Tilde{G}$.
    This implies that every vertex in the graph $\Tilde{G}$ is even. Hence $\Tilde{G}$ is an Eulerian graph.
    
    We find an Eulerian trail in the graph $\Tilde{G}$. By using the trial, we obtain a single closed walk, say $W$, in the original graph, where each parallel edge in $\Tilde{G}$ is replaced by its corresponding original edge in $G$. $W$ has exactly as many edges (counted with multiplicity) as there are edges in $G'$.
    By Lemma \ref{openingwalk}, we can obtain a single cycle, say $C'$, by $|W|-|V(G)|$ number of inclusive vertex splits. We have already observed that $|W|=|E(G')|$.
    Hence, we need exactly $|E(G')|-|V(G)|$, which is equal to $k$, the number of splits to get a single cycle. Backward direction is trivial.\end{proof}

As a consequence of Observation \ref{splitnumber} and Lemma \ref{equalsplits}, we conclude that for a graph $G$, the problem of determining the minimum number of inclusive vertex splits required to obtain a cycle graph reduces to the problem of finding the smallest closed walk in $G$ that traverses every edge at least once. The problem of finding the smallest closed walk that visits every edge is precisely the well-studied {\sc Chinese Postman}, which is polynomial-time solvable ~\cite{EdmondsJohnson1973}. Hence, we get the following theorem.

\begin{theorem}\label{Res:CGIVS}
    The {\sc Cycle Graph-Inclusive Vertex Splitting} is solvable in polynomial time.
\end{theorem}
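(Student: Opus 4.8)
The plan is to reduce \textsc{Cycle Graph-Inclusive Vertex Splitting} to a polynomially solvable routing problem by combining the structural results already established in this section. The final theorem claims polynomial-time solvability, and the key insight is that the minimum number of inclusive vertex splits needed to transform $G$ into a cycle graph is governed entirely by the length of a shortest closed walk covering every edge of $G$. I would first invoke Observation \ref{splitnumber}, which tells us that if $G'$ is any cycle graph obtained from $G$, then the number of splits is exactly $|E(G')| - |V(G)|$. Since $|V(G)|$ is fixed, minimizing the number of splits is equivalent to minimizing $|E(G')|$, i.e., minimizing the number of edges in the target cycle graph.

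The next step is to connect the target cycle graph's edge count to closed walks in $G$. By Lemma \ref{equalsplits}, producing a disjoint union of cycles with $k$ splits is equivalent to producing a single cycle with $k$ splits, so I may assume the target is a single cycle $C'$. A single cycle of length $|E(C')|$ corresponds, under the reverse of the splitting sequence, to a closed walk $W$ in $G$ that visits every vertex and traverses every edge at least once, with $|W| = |E(C')|$; conversely, Lemma \ref{openingwalk} shows that any such covering closed walk $W$ yields a sequence of exactly $|W| - |V(G)|$ inclusive splits producing a single cycle. Chaining these equivalences, the minimum number of splits equals $(\min_W |W|) - |V(G)|$, where the minimum ranges over closed walks covering all edges of $G$.

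Finally, I would identify the optimization problem $\min_W |W|$ as exactly the \textsc{Chinese Postman} problem: find the shortest closed walk traversing every edge of $G$ at least once. This is classically solvable in polynomial time via the Edmonds--Johnson algorithm \cite{EdmondsJohnson1973} (reducing to a minimum-weight $T$-join / matching on odd-degree vertices). Computing this optimum value $W^{\star}$ and comparing $|W^{\star}| - |V(G)|$ against the input bound $k$ decides the instance: $(G,k)$ is a yes-instance if and only if $|W^{\star}| - |V(G)| \le k$. Since all intermediate reductions preserve the exact split count and \textsc{Chinese Postman} runs in polynomial time, the whole decision procedure is polynomial.

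The main obstacle, already handled by the lemmas I am assuming, is the bidirectional correspondence between covering closed walks and single-cycle targets; the subtlety is that a closed walk may traverse edges multiple times and the splitting construction must faithfully realize each repetition as a fresh vertex of the final cycle, which Lemma \ref{openingwalk} and Algorithm \ref{Algo: Constructing a cycle graph from a walk} establish. Given those, the only remaining care is to confirm that a shortest covering closed walk indeed visits every vertex (true since $G$ has no isolated vertices and every edge must be covered), so that the hypothesis of Lemma \ref{openingwalk} is met; this is immediate and requires no further work.
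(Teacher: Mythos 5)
Your proposal is correct and follows essentially the same route as the paper: it chains Observation~\ref{splitnumber}, Lemma~\ref{equalsplits}, and Lemma~\ref{openingwalk} to show the minimum split count equals the length of a shortest closed walk covering all edges minus $|V(G)|$, and then invokes the polynomial-time \textsc{Chinese Postman} algorithm of Edmonds and Johnson, exactly as the paper does. Your additional remark verifying that the optimal covering walk visits every vertex (so Lemma~\ref{openingwalk} applies) is a small point the paper leaves implicit, but it does not change the argument.
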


\section{\textsc{Linear Forest-Inclusive/Exclusive Vertex Splitting}}~\label{Sec: Linear Forest}
The aim of this section is to give a polynomial-time algorithm for {\sc Linear Forest-Inclusive Vertex Splitting} and {\sc Linear Forest-Exclusive Vertex Splitting}. Drawing an analogy with the previous section, we note that the focus of attention now shifts from searching for closed walks to searching for trails. We first define our problems formally.\\

\fbox
    {\begin{minipage}{38.7em}\label{LFEVS}
       {\sc Linear Forest-Exclusive Vertex Splitting (Linear Forest-EVS)}\\
        \noindent{\bf Input:} An undirected graph $G=(V,E)$, and a positive integer $k$.\\
    \noindent{\bf Question:} Is there a sequence of at most $k$ exclusive vertex splittings that transforms $G$ into a linear forest?
    \end{minipage} }
\vspace{3mm}

\fbox
    {\begin{minipage}{38.7em}\label{LFIVS}
       {\sc Linear Forest-Inclusive Vertex Splitting (Linear Forest-IVS)}\\
        \noindent{\bf Input:} An undirected graph $G=(V,E)$, and a positive integer $k$.\\
    \noindent{\bf Question:} Is there a sequence of at most $k$ inclusive vertex splittings that transforms $G$ into a linear forest?
    \end{minipage} }
\vspace{3mm}

\begin{observation}\label{exisin}
    The minimum number of exclusive splits required is the same as the minimum number of inclusive splits required to convert a graph $G$ to a linear forest (all the connected components are paths with at least one edge).
\end{observation}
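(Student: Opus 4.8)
The plan is to prove the equality by establishing two inequalities. Write $\mu_{\mathrm{inc}}(G)$ (resp. $\mu_{\mathrm{exc}}(G)$) for the minimum number of inclusive (resp. exclusive) splits needed to turn $G$ into a linear forest; both quantities are finite, since already splitting every vertex $v$ into $d_G(v)$ degree-one copies yields a disjoint union of single edges, which is a linear forest. The easy direction is $\mu_{\mathrm{inc}}(G) \le \mu_{\mathrm{exc}}(G)$: every exclusive split is by definition also an inclusive split (it is the special case $N(v_1)\cap N(v_2)=\emptyset$), so any exclusive splitting sequence reaching a linear forest is simultaneously an inclusive one of the same length.

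The substance is the reverse inequality $\mu_{\mathrm{exc}}(G)\le \mu_{\mathrm{inc}}(G)$, for which I would isolate the single way an inclusive split can fail to be exclusive: a shared neighbor $z$ of the two copies $v_1,v_2$, which turns the original edge $uz$ into the \emph{two} edges $v_1z$ and $v_2z$. In other words, a genuinely inclusive (non-exclusive) split is an exclusive split \emph{plus} the duplication of one edge. This is precisely the situation that Lemma~\ref{edgemerging} reverses at the level of decompositions, and I would exploit the same mechanism here. The key auxiliary fact is a monotonicity lemma: for any edge $e$, $\mu_{\mathrm{inc}}(G-e)\le \mu_{\mathrm{inc}}(G)$. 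To see this, take an optimal inclusive sequence turning $G$ into a linear forest $L$ and perform the identical split operations on $G-e$; the result is $L$ with all descendant edges of $e$ removed, which is again a disjoint union of paths (possibly with some new isolated vertices), and each such isolated vertex can only have been produced by a now-superfluous split, which we may drop, strictly lowering the count.

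With monotonicity in hand, I would argue by strong induction on $\mu_{\mathrm{inc}}(G)$ that $\mu_{\mathrm{exc}}(G)\le \mu_{\mathrm{inc}}(G)$. Fix an optimal inclusive sequence and look at its first split, producing $H$ from $G$, so $\mu_{\mathrm{inc}}(H)\le \mu_{\mathrm{inc}}(G)-1$. If that first split is already exclusive, the induction hypothesis applied to $H$ gives an exclusive sequence for $H$ of length $\le \mu_{\mathrm{inc}}(H)$, and prepending the exclusive first split finishes the case. If instead the first split duplicates an edge, let $H'$ be the exclusive version of that same split, i.e.\ $H'=H-f$ where $f$ is the duplicate edge; then monotonicity gives $\mu_{\mathrm{inc}}(H')\le \mu_{\mathrm{inc}}(H)\le \mu_{\mathrm{inc}}(G)-1$, the induction hypothesis gives an exclusive sequence for $H'$ of length $\le \mu_{\mathrm{inc}}(H')$, and prepending the exclusive split realizing $H'$ from $G$ yields an exclusive sequence for $G$ of length $\le \mu_{\mathrm{inc}}(G)$. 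Combining the two inequalities gives $\mu_{\mathrm{exc}}(G)=\mu_{\mathrm{inc}}(G)$.

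I expect the main obstacle to be the bookkeeping in the monotonicity lemma caused by the convention that every component must be a path \emph{with at least one edge}: deleting a descendant edge can create an isolated vertex, so I must argue carefully that such a vertex always arises from a droppable split and that removing it never increases the number of splits. The rest—verifying that the ``same splits'' are well defined on $G-e$ and that a non-exclusive split is exactly an exclusive split with one duplicated edge—is routine but needs to be stated precisely so that the induction is manifestly well-founded.
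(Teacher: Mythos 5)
Your two-inequality plan and the easy direction coincide with the paper's, but your engine for $\mu_{\mathrm{exc}}(G)\le\mu_{\mathrm{inc}}(G)$ --- induction on $\mu_{\mathrm{inc}}$ driven by the monotonicity lemma $\mu_{\mathrm{inc}}(G-e)\le\mu_{\mathrm{inc}}(G)$ --- has a genuine gap: that lemma is false under the convention you yourself quote (every component must be a path \emph{with at least one edge}). A vertex split never gives an isolated vertex any edge, so any graph containing an isolated vertex has $\mu_{\mathrm{inc}}=+\infty$; already for $G$ a single edge $uv$ we get $\mu_{\mathrm{inc}}(G)=0$ while $\mu_{\mathrm{inc}}(G-uv)=\infty$. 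Your proposed proof of the lemma asserts that every isolated vertex created by the deletion ``can only have been produced by a now-superfluous split,'' which is also false in general: delete an edge incident to a degree-one vertex of $G$ that the optimal sequence never splits; the isolated vertex is an \emph{original} vertex and there is no split to drop. Worse, the failure occurs exactly where the induction needs the lemma: if the first split sends $v$ to $v_1,v_2$ with $N(v_2)\subseteq N(v_1)$, then the exclusive version $H'$ leaves $v_2$ isolated, $\mu_{\mathrm{inc}}(H')=\infty$, and the inductive step collapses. (A secondary slip: a non-exclusive split may duplicate \emph{several} edges, one per shared neighbor, not ``one edge,'' so even in the good case you must delete all duplicates, iterating the monotonicity argument rather than applying it once.)

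The gap is repairable, and the repairs point back to the paper's route. Either (i) run your whole argument with the relaxed target ``acyclic with maximum degree two, singletons allowed'' --- there your monotonicity lemma is clean, since replaying the same splits on $G-e$ yields the final forest minus the descendant edges of $e$, with no dropping needed --- and add one terminal cleanup lemma showing that exclusive splits which create singletons can be omitted; or (ii) in the degenerate case choose the exclusive distribution so that $v_2$ retains exactly one shared neighbor (possible whenever $d(v)\ge 2$), avoiding isolation. Option (i) is essentially what the paper does, but globally rather than inductively: it simulates the entire inclusive sequence with exclusive splits in a single pass (for each shared adjacency, keep the edge at only one copy, and propagate missing adjacencies forward), observes that the resulting edge set is a subset of the inclusive one, hence the result is a linear forest plus possibly singleton components, and only then discards the splits that produced singletons. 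Your inductive packaging via edge-deletion monotonicity is a legitimately different organization, but as written the key lemma fails on precisely the degenerate splits it must handle; the bookkeeping you deferred to the final paragraph is not routine --- it is the crux, and without it the induction is not well-founded.
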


\begin{proof}
Let $k_{\mathrm{in}}(G)$ (respectively $k_{\mathrm{ex}}(G)$) be the minimum number of inclusive (respectively exclusive) vertex splits required to obtain a linear forest from the graph $G$.

Since every exclusive vertex split is a special case of an inclusive vertex split, we have $k_{\mathrm{in}}(G) \leq k_{\mathrm{ex}}(G)$.
  
We claim that $k_{\mathrm{ex}}(G) \leq k_{\mathrm{in}}(G)$. For the sake of contradiction, we assume that $k_{\mathrm{in}}(G) < k_{\mathrm{ex}}(G)$.

Suppose we construct a linear forest $G_P'$ from $G$ using $k_{in}$ inclusive vertex splits. We can simulate this sequence using exclusive vertex splits while preserving the number of splits. Specifically, whenever a vertex $v$ is split into $v_1$ and $v_2$ in the inclusive version, where both are adjacent to a vertex $u$ (to which $v$ was originally adjacent), we modify the step in the exclusive version by keeping only one of $v_1$ or $v_2$ adjacent to $u$ (chosen arbitrarily).

This might cause a situation where, in the inclusive sequence, a vertex $x$ was split into $x_1$ and $x_2$, with at least one of them adjacent to some vertex $y$, while in the exclusive version, $x$ was not adjacent to $y$ at all. In such cases, we ensure that neither $x_1$ nor $x_2$ is adjacent to $y$ in the exclusive version. This entire description is written as algorithm~\ref{Alg: Simulating Inclusive Vertex Splits Using Exclusive Vertex Splits} where we assume $v_1,v_2,\cdots,v_{k_{in}}$ is the sequence of inclusive vertex split that transforms $G$ to $Gp'$ such that after the first $l$ splits (in the sequence mentioned), the graph generated is ${G_p^{l}}'$ evidently, ${G_p^{k_{in}}}'=G_p'$ and with a similar use of notation we denote the graph in our algorithm after $k$ exclusive as ${G^{k}}''_P $.  

\begin{algorithm}
\caption{Simulating Inclusive Vertex Splits Using Exclusive Vertex Splits}
\label{Alg: Simulating Inclusive Vertex Splits Using Exclusive Vertex Splits}
\begin{algorithmic}
\For{$i \in [k_{\mathrm{in}}]$}
    \State Let $v_i$ be the vertex to be split in the $i$-th inclusive split.
    \If{$v_i \to v_i^1, v_i^2$ with both $v_i^1, v_i^2$ adjacent to some 
         $u \in N_{{G_P^{i-1}}'}(v_i)$ such that $u \in N_{{G_P^{i-1}}''}(v_i)$}
        \State Split $v_i \in V({G_P^{i-1}}'')$ into $v_i^1, v_i^2$.
        \State For each such $u$, make \emph{either} $v_i^1$ or $v_i^2$ adjacent to $u$ in ${G_P^{i-1}}''$.
    \ElsIf{$v_i \to v_i^1, v_i^2$ with at least one of $v_i^1$ or $v_i^2$ 
            adjacent to some $u \in N_{{G_P^{i-1}}'}(v_i)$ such that 
            $u \notin N_{{G_P^{i-1}}''}(v_i)$}
        \State Split $v_i \in V({G_P^{i-1}}'')$ into $v_i^1, v_i^2$.
        \State For each such $u$, make neither $v_i^1$ nor $v_i^2$ adjacent to $u$ in ${G_P^{i-1}}''$.
    \Else
        \State Proceed as in the inclusive split version.
    \EndIf
\EndFor
\State \textbf{return} ${G^{k_{\mathrm{in}}}}''_P$
\end{algorithmic}
\end{algorithm}

Following these changes and the same sequence as the inclusive vertex split (modeled as exclusive vertex split), we obtain a graph $G_P'' (={G^{k_{\mathrm{in}}}}''_P)$. The number of splits in both constructions is identical, i.e., both sequences contain $k_{in}$ splits. Furthermore, the resulting edge set of $G_P''$ is a subset of that of the inclusive split graph, i.e., $E(G_P'') \subseteq E(G_P')$
Consequently, $G_P''$ is a linear forest, possibly with some singleton components.

Any singleton vertex $v$ in $G_P''$ can be disregarded: if such a vertex arose from splitting a vertex $u$ into $v$ and $w$, then $v$ is isolated while $w$ plays the role of $u$. Omitting that split would reduce the total number of splits by one without altering the final structure, i.e, it would still consist solely of acyclic components of maximum degree 2. Therefore, we can assume without loss of generality that $G_P''$ contains no singleton vertices.

This shows that a linear forest can be constructed using at most $k_{in}$ exclusive vertex splits. This says $k_{in} \leq k_{ex}$, which contradicts the minimality of $k_{ex}$.
\end{proof}

To prove the main result of this section, we make use of the following well-known lemma, whose proofs can be found in ~\cite{west2001introduction, diestel2017graph}.  

\begin{lemma}\label{minnoofwalks}
    For a graph $G$, with $\alpha>1$ many vertices with odd degree, the minimum number of trails (denoted as $\textbf{MNT}$) that decompose it is exactly $\frac{\alpha}{2}$, i.e $\textbf{MNT}=\frac{\alpha}{2}$. However, if the connected graph has $\alpha=0,$ i.e, it is an Eulerian graph, a single trail (the Eulerian closed trail) can cover the entire edge set, i.e, $\textbf{MNT}=1$.   %\hk{$\alpha=0$ case is not considered. I guess putting the condition that $\alpha>0$ will solve the problem.} 
    More succinctly, for a nontrivial connected graph with $\alpha$ odd-vertices, the minimum number of trails ($MNT$) that decompose it is $\max(\frac{\alpha}{2},1)$, i.e $\textbf{MNT}=\max(\frac{\alpha}{2},1)$.
\end{lemma}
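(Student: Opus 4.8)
The plan is to prove \Cref{minnoofwalks} by reducing the trail-decomposition question to an Eulerian-type argument via a standard pairing-up of odd-degree vertices. First I would handle the easy direction, the lower bound on the number of trails: since every trail has exactly two endpoints (for an open trail) or zero odd-degree endpoints (for a closed trail), and since the interior vertices of a trail contribute even degree, each trail in a decomposition can ``absorb'' at most two odd-degree vertices of $G$ (one at each of its two ends). As the trails partition $E(G)$, the degrees contributed by all trails at a vertex $v$ sum to $d_G(v)$; a vertex of odd degree must therefore be an endpoint of an odd number of trails, hence of at least one trail. Counting endpoints shows that with $\alpha$ odd-degree vertices we need at least $\alpha/2$ trails, giving $\textbf{MNT} \ge \alpha/2$ when $\alpha > 0$, and trivially $\textbf{MNT}\ge 1$ for any nontrivial graph.

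Next I would establish the matching upper bound by an explicit construction. For $\alpha = 0$, connectivity plus all-even degrees is exactly the hypothesis of Euler's theorem (available via \Cref{evendegree} and the classical references already cited), so a single closed Eulerian trail covers $E(G)$ and $\textbf{MNT} = 1$. For $\alpha > 1$ (necessarily even, since the number of odd-degree vertices in any graph is even), I would pair the odd-degree vertices arbitrarily into $\alpha/2$ pairs and add a new auxiliary edge joining each pair, forming an auxiliary multigraph $G^{+}$. In $G^{+}$ every vertex has even degree, and $G^{+}$ is connected (it contains the connected $G$ as a spanning subgraph), so by Euler's theorem $G^{+}$ has a closed Eulerian trail. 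Deleting the $\alpha/2$ added edges from this closed trail cuts it into exactly $\alpha/2$ open trails whose edges partition $E(G)$; hence $\textbf{MNT} \le \alpha/2$. Combining the two bounds yields $\textbf{MNT} = \alpha/2$ for $\alpha > 1$, and together with the $\alpha = 0$ case this gives the succinct formula $\textbf{MNT} = \max(\alpha/2, 1)$.

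The main obstacle, and the point requiring the most care, is ensuring that deleting the auxiliary edges from the Eulerian circuit of $G^{+}$ really yields exactly $\alpha/2$ trails each of which lies entirely in $G$ and that together they cover every edge of $G$ exactly once. One must argue that removing $\alpha/2$ edges from a single closed walk breaks it into at most $\alpha/2$ arcs, and that none of these arcs is empty (which is where the pairing being a \emph{proper} matching on \emph{distinct} odd vertices matters, so that no added edge is a self-loop and consecutive deletions do not collapse an arc to nothing). A clean way to finish this is to note that since each added edge is traversed exactly once in the Eulerian circuit, the circuit decomposes into $\alpha/2$ segments separated by these edges; each segment is a trail in $G$ because its interior uses only original edges, and edge-disjointness of the segments plus the fact that the circuit used every original edge exactly once guarantees the segments form a genuine trail decomposition of $G$. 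I would also remark briefly that although \Cref{evendegree} is stated for cycle decompositions, the underlying Euler-circuit fact it relies on is exactly what the construction needs, so no new machinery beyond the cited classical results is required.
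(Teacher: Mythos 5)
Your proof is correct, but note that there is nothing in the paper to compare it with: the paper states Lemma~\ref{minnoofwalks} as a known result and defers its proof entirely to the cited textbooks~\cite{west2001introduction, diestel2017graph}. Your argument is the standard one and is sound in both directions. For the lower bound, the parity count works: in any trail decomposition the trails' degree contributions at a vertex $v$ sum to $d_G(v)$; a closed trail contributes evenly everywhere, while an open trail contributes oddly exactly at its two (distinct) endpoints, so each odd-degree vertex must be the endpoint of an odd number, hence at least one, of the open trails, and counting endpoints gives $\textbf{MNT} \ge \max\left(\frac{\alpha}{2},1\right)$. For the upper bound, the one delicate point you flag---that deleting the $\frac{\alpha}{2}$ auxiliary edges from the Eulerian circuit of $G^{+}$ might collapse a segment---is resolved exactly as you say: the auxiliary edges form a matching on distinct odd vertices, so no two of them can be consecutive on the circuit, and each of the $\frac{\alpha}{2}$ resulting segments is a nonempty trail using only edges of $G$; and even if a segment could vanish, that would only lower the count and contradict your own lower bound. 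You are also right to be careful that Lemma~\ref{evendegree} (cycle decomposition from even degrees) is not Euler's circuit theorem, since connectivity is needed, and you invoke the circuit theorem correctly for the connected even multigraph $G^{+}$. Two remarks: first, your proof (correctly) assumes $G$ connected throughout, matching the lemma's succinct formulation---read literally for arbitrary graphs, the lemma's opening sentence fails, since an extra Eulerian component forces an additional closed trail and $\textbf{MNT} > \frac{\alpha}{2}$; second, your false-edge pairing is precisely the device the paper itself deploys later in the proof of Theorem~\ref{Res:RLFEVS} to ``open'' the $\frac{\alpha}{2}$ trails, so your proposal in effect supplies the justification for the decomposition fact that argument rests on.
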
 

\begin{theorem}\label{Res:RLFEVS}
 In polynomial time, one can compute the minimum number of exclusive vertex splits required to construct a linear forest $G_P$ from an arbitrary graph $G$.   
\end{theorem}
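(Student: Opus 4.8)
The plan is to reduce the computation to counting the minimum number of trails needed to decompose $G$, and then to evaluate that quantity via Lemma~\ref{minnoofwalks}. The crucial structural fact is that an exclusive split never changes the number of edges, so if $G_P$ is \emph{any} linear forest obtained from $G$ by a sequence of exclusive splits, then $|E(G_P)| = |E(G)| = m$. Writing $c$ for the number of (path) components of $G_P$ and noting that a path with $e_i$ edges has $e_i+1$ vertices, we get $|V(G_P)| = \sum_{i}(e_i+1) = m + c$. Since each exclusive split adds exactly one vertex, the number of splits used equals $|V(G_P)| - |V(G)| = m - n + c$. Hence minimizing the number of exclusive splits is equivalent to minimizing $c$, the number of path components of the resulting linear forest.

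Next I would show that the smallest achievable value of $c$ equals $\textbf{MNT}$, the minimum number of trails in an edge-decomposition of $G$. For the lower bound, starting from any linear forest $G_P$ produced by exclusive splits, I reverse the splitting sequence by repeatedly identifying each pair of descendants with their common ancestor. Because exclusive splits preserve edges, each path component of $G_P$ relabels to a walk in $G$ that repeats no edge, i.e.\ a trail; distinct components give edge-disjoint trails, and together they cover $E(G)$. This yields a trail decomposition of $G$ into exactly $c$ trails, so $c \ge \textbf{MNT}$. For the upper bound, I take a trail decomposition of $G$ into $\textbf{MNT}$ trails and \emph{unfold} each trail into a path by splitting every vertex that the trail visits more than once, exactly as in the open-walk analogue of Lemma~\ref{openingwalk} and its accompanying algorithm. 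Unfolding the trails independently produces a linear forest whose components are in bijection with the trails, so the value $c = \textbf{MNT}$ is attained. Combining the two bounds, the minimum number of exclusive splits equals $m - n + \textbf{MNT}$.

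It then remains to evaluate $\textbf{MNT}$ and to observe that it is polynomial-time computable. Applying Lemma~\ref{minnoofwalks} to each connected component $C_i$ of $G$, with $\alpha_i$ the number of odd-degree vertices of $C_i$, gives $\textbf{MNT} = \sum_i \max(\alpha_i/2,\,1)$, where the sum ranges over the components carrying at least one edge. Consequently
\[
k_{\mathrm{ex}}(G) \;=\; m - n + \sum_{i}\max\!\Big(\tfrac{\alpha_i}{2},\,1\Big),
\]
and since $m$, $n$, the connected components, and the parities of all vertex degrees are computable in linear time, this formula is evaluable in polynomial time. By Observation~\ref{exisin}, the same value is the minimum number of inclusive splits, which the companion corollary for \textsc{Linear Forest-IVS} will invoke.

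The step I expect to be the main obstacle is the exact split count in the upper-bound construction, and in particular the treatment of Eulerian components. Such a component has $\alpha_i = 0$ and is decomposed by a single closed trail; unfolding a closed trail yields a cycle rather than a path, so one additional ``seam'' split is needed to open it, which is precisely why the correct per-component contribution is $\max(\alpha_i/2,1)$ rather than $\alpha_i/2$. Verifying that this bookkeeping is consistent with the global identity $\text{splits} = m - n + c$ across all cases---open trails, closed trails, single-edge components, and isolated vertices (which, having no incident edges, cannot lie on a path with an edge and must be excluded under the convention that every component of the target linear forest has at least one edge)---is the delicate part; the remaining arguments are routine.
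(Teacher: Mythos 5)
Your proposal is correct and follows essentially the same route as the paper's proof: the edge-preservation identity $k = m - n + \mathcal{C}(G_P)$, the correspondence between path components of $G_P$ and edge-disjoint trails of $G$, Lemma~\ref{minnoofwalks} to evaluate $\textbf{MNT}$, and the opening procedure of Lemma~\ref{openingwalk} (with false edges and the extra seam split for closed trails on Eulerian components). If anything, your write-up is slightly more careful than the paper's, since you make the lower bound $c \ge \textbf{MNT}$ explicit via the ancestor-map reversal argument and sum $\max(\alpha_i/2,1)$ per component, where the paper argues as if $G$ were connected and only invokes the general formula $m - n + \textbf{MNT}$ at the end.
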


\begin{proof}
    Similar to how we argued that the cycles after cycle graph vertex splitting corresponds to closed trails, one can argue that the paths in $G_{P}$ after linear forest exclusive vertex split correspond to trails in the graph $G$.

    Since we consider only \emph{exclusive vertex splits}, the number of edges remains invariant throughout the execution of the procedure; that is, $|E(G)| = |E(G_P)|$. Each vertex split, regardless of type, increases the number of vertices in the graph by exactly one. Hence, we have the following relationship:
\[
|V(G_P)| - \mathcal{C}(G_P) = |V(G)| + k - \mathcal{C}(G_P) = |E(G_P)|.
\]
Here, $\mathcal{C}(G)$ denotes the number of connected components of $G$.
In the case of exclusive vertex splits, the number of edges remains unchanged, i.e.
\[
|E(G_P)| = |E(G)|.
\]

Therefore, for exclusive vertex splits, we can express:
\[
k = |E(G)| - |V(G)| + \mathcal{C}(G_P).
\]
where $k$ denotes the number of vertex splits performed.
As the quantities $|E(G)|$, $|V(G)|$, and consequently $|E(G_P)|$ are fixed for a given graph $G$, it follows that the number of exclusive vertex splits depends solely on the number of connected components $\mathcal{C}(G_P)$ obtained after executing the splitting sequence. Minimizing the number of exclusive vertex splits, therefore, corresponds to minimizing the number of edge-disjoint trails required to cover all edges of $G$. First, we consider the case when $G$ is a non-Eulerian graph. By the handshake lemma, the number of odd-degree vertices is at least $2$. Hence, by \cref{minnoofwalks}, the minimum number of trails required to cover all edges is equal to $\frac{\alpha}{2}$.

Although we do not explicitly describe the process of “opening” these trails, the construction is analogous to the procedure outlined in \cref{openingwalk}. Specifically, we conceptually introduce $\frac{\alpha}{2}$ auxiliary (or \emph{false}) edges between paired odd-degree vertices, apply the same opening procedure as in \cref{openingwalk}, and subsequently remove these auxiliary edges. The resulting configuration yields exactly $\frac{\alpha}{2}$ edge-disjoint paths (i.e the graph $G_P$). We can get to these $\frac{\alpha}{2}$ by following the same procedure as with the false edges, but anytime we assign the false edge to one of the descendants of a vertex, we skip such a step during our vertex split to the linear forest ($G_P$). Thus, we achieve the minimum number of vertex exclusive vertex splits required is $k = |E(G)| - |V(G)| + \frac{\alpha}{2}$.

If $G$ is an Eulerian graph, then we open the trail using $|E(G)| - |V(G)|$ splits and then do a single exclusive split to open the cycle into a path (note that this many splits for an Eulerian graph is essential since we at least have one component in the linear forest, i.e $k\geq |E(G)| - |V(G)| + \mathcal{C}(G_P)=|E(G)| - |V(G)| + 1$ ). In general we have $|E(G)|-|V(G)|+\textbf{MNT}$ many splits. 
\end{proof}

\begin{corollary}\label{Res:RLFIVS}
    In polynomial time, one can compute the minimum number of Inclusive vertex splits required to construct a linear forest $G_P$ from an arbitrary graph $G$. 
\end{corollary}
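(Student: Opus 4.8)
The plan is to derive Corollary \ref{Res:RLFIVS} directly from Theorem \ref{Res:RLFEVS} together with Observation \ref{exisin}. The key structural fact we already have in hand is Observation \ref{exisin}, which asserts that for any graph $G$ the minimum number of inclusive vertex splits $k_{\mathrm{in}}(G)$ required to reach a linear forest equals the minimum number of exclusive vertex splits $k_{\mathrm{ex}}(G)$. Thus the two optimization quantities coincide as numbers, and it suffices to show that the exclusive-side quantity is computable in polynomial time --- which is exactly the content of Theorem \ref{Res:RLFEVS}.

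Concretely, I would argue as follows. First, I would invoke Observation \ref{exisin} to record that $k_{\mathrm{in}}(G) = k_{\mathrm{ex}}(G)$ for every input graph $G$ with no isolated vertices (and note that isolated vertices can be harmlessly ignored, as in the remark following Theorem \ref{equivalency}). Next, I would invoke Theorem \ref{Res:RLFEVS}, which gives an explicit polynomial-time procedure computing $k_{\mathrm{ex}}(G)$; in fact the theorem yields the closed form $k_{\mathrm{ex}}(G) = |E(G)| - |V(G)| + \textbf{MNT}$, where $\textbf{MNT}$ is the minimum number of trails decomposing $G$, which by Lemma \ref{minnoofwalks} equals $\max(\alpha/2, 1)$ per nontrivial component with $\alpha$ odd-degree vertices. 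Since $|E(G)|$, $|V(G)|$, the parity of each degree, and the component structure are all computable in polynomial time, the value $\textbf{MNT}$ and hence $k_{\mathrm{ex}}(G)$ can be computed in polynomial time. Combining the two gives a polynomial-time algorithm for $k_{\mathrm{in}}(G)$.

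I do not anticipate a genuine obstacle here, since the corollary is essentially a packaging of two already-established results; the only subtlety to address carefully is the bookkeeping of components and odd-degree vertices across multiple connected components of $G$, so that the formula $k_{\mathrm{in}}(G) = |E(G)| - |V(G)| + \sum_{\text{components } C} \max(\alpha_C/2, 1)$ (summing $\textbf{MNT}$ over components, and discarding isolated vertices) is stated precisely. I would also remark that the equivalence is purely at the level of the optimum count: the actual inclusive splitting sequence, if one wants it explicitly, can be produced by simulating an optimal exclusive sequence via Algorithm \ref{Alg: Simulating Inclusive Vertex Splits Using Exclusive Vertex Splits} in reverse, or more simply by noting that every exclusive split is already an inclusive split, so any optimal exclusive sequence is itself a valid inclusive sequence of the same length. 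This confirms that the minimum over inclusive splits is attained and is polynomial-time computable, completing the proof.
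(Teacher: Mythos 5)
Your proposal is correct and follows essentially the same route as the paper, which also derives this corollary directly from Observation \ref{exisin} combined with the polynomial-time computation of the exclusive quantity in Theorem \ref{Res:RLFEVS}. Your additional bookkeeping (the per-component formula via Lemma \ref{minnoofwalks} and the handling of isolated vertices) merely makes explicit what the paper leaves implicit.
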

\begin{proof}
    From Observation \ref{exisin} we can easily see the proof.
\end{proof}

\section{{\sc Bipartite Inclusive/Exclusive Vertex Splitting}}~\label{Sec: Bipartite}
In this section, we prove that {\sc Bipartite-Inclusive Vertex Splitting} and {\sc Bipartite-Exclusive Vertex Splitting} are NP-complete. Firbas and Sorge \cite{Splitthesis,FirbasSorge2024} showed that {\sc Bipartite-Inclusive Vertex Splitting} is NP-hard by providing a reduction from {\sc 2-Subdivided Cubic Vertex Cover}. We establish that {\sc Bipartite-Inclusive Vertex Splitting} is NP-complete by 
presenting a reduction from {\sc Bipartite-Vertex Deletion} (popularly known as {\sc Odd Cycle Transversal}). The advantage of our reduction is that it demonstrates the equivalence between  {\sc Bipartite-Inclusive Vertex Splitting} and {\sc Bipartite-Vertex Deletion}. So, the existence of a kernelization or an FPT algorithm for {\sc Bipartite-Vertex Deletion} implies the existence of the same for {\sc Bipartite-Inclusive Vertex Splitting}. Moreover, our proof is comparatively easy to understand.

We first define our problems formally.

\fbox
    {\begin{minipage}{38.7em}\label{BIVS}
       {\sc Bipartite-Inclusive Vertex Splitting (Bipartite-IVS)}\\
        \noindent{\bf Input:} An undirected graph $G=(V,E)$, and a positive integer $k$.\\
    \noindent{\bf Question:} Is there a sequence of at most $k$ inclusive vertex splitting that transforms $G$ into a bipartite graph?
    \end{minipage} }\\
\vspace{3mm}

\fbox
    {\begin{minipage}{38.7em}\label{BEVS}
       {\sc Bipartite-Exclusive Vertex Splitting (Bipartite-EVS)}\\
        \noindent{\bf Input:} An undirected graph $G=(V,E)$, and a positive integer $k$.\\
    \noindent{\bf Question:} Is there a sequence of at most $k$ exclusive vertex splitting that transforms $G$ into a bipartite graph?
    \end{minipage} }\\
\vspace{3mm}

\begin{theorem}
    Let $G$ be a graph. Then $(G,k)$ is a yes-instance of {\sc Bipartite-Inclusive Vertex Splitting} if and only if $(G,k)$ is a yes-instance of {\sc Bipartite-Vertex Deletion}. 
\end{theorem}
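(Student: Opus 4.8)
The plan is to establish the stated equivalence by proving the two implications separately, with the budget $k$ held fixed in both. Recall that a yes-instance of {\sc Bipartite-Vertex Deletion} is exactly a set $S \subseteq V(G)$ with $|S| \le k$ such that $G-S$ is bipartite, i.e.\ an odd cycle transversal. So the goal is to turn such a set into a sequence of at most $k$ inclusive splits, and conversely to extract such a set from any splitting sequence of length at most $k$ that produces a bipartite graph.

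First I would treat the direction from {\sc Bipartite-Vertex Deletion} to {\sc Bipartite-Inclusive Vertex Splitting}. Let $S$ be an odd cycle transversal with $|S| \le k$, and fix a proper $2$-colouring $c \colon V(G)\setminus S \to \{0,1\}$ of the bipartite graph $G-S$. The idea is to split each vertex $v \in S$ exactly once, into a copy $v_0$ destined to receive colour $0$ and a copy $v_1$ destined to receive colour $1$, and to reassign the edges incident to $v$ so that the final graph admits the $2$-colouring extending $c$ by $c(v_0)=0$ and $c(v_1)=1$. Concretely, an edge $vu$ with $u \notin S$ is sent to $v_1$ when $c(u)=0$ and to $v_0$ when $c(u)=1$; an edge $vw$ with both endpoints in $S$ is routed so that it joins $v_0$ to $w_1$ (equivalently $v_1$ to $w_0$). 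Every edge of the resulting graph then joins a colour-$0$ vertex to a colour-$1$ vertex, so the graph is bipartite, and at most $|S| \le k$ splits are performed.

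For the converse I would start from a splitting sequence $G=G_0, \ldots, G_t$ with $t \le k$ whose final graph $G_t$ is bipartite, and let $S \subseteq V(G)$ be the set of original vertices that are split at some point, equivalently those having at least two descendants in $G_t$. Since every split increases the vertex count by exactly one, $\sum_{v \in V(G)} (d(v)-1) = t$, where $d(v)$ is the number of descendants of $v$ in $G_t$; each $v \in S$ contributes at least $1$ to this sum, so $|S| \le t \le k$. It then remains to note that $G-S$ is bipartite: each vertex of $V(G)\setminus S$ survives unsplit as a single vertex of $G_t$, and an inclusive split only reassigns edges incident to the vertex being split and creates no new adjacency between two unsplit vertices, so two vertices of $V(G)\setminus S$ are adjacent in $G_t$ if and only if they are adjacent in $G$. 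Hence $G-S$ is isomorphic to an induced subgraph of the bipartite graph $G_t$ and is therefore bipartite, making $S$ an odd cycle transversal of size at most $k$.

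The step I expect to require the most care is the routing of the edges with both endpoints in $S$ in the first direction: one must verify that the desired pairing of copies ($v_0$ with $w_1$) is genuinely realisable by a legitimate sequence of inclusive splits, rather than merely describing a target graph. The clean way to see this is to split the vertices of $S$ one at a time: when $v$ is split, a still-unsplit neighbour $w \in S$ is a single vertex, so the edge $vw$ may be freely assigned to $v_0$ or $v_1$; when $w$ is split later, the edge, now incident to the chosen copy of $v$, may be freely assigned to $w_0$ or $w_1$, so all four copy-pairings, in particular the properly coloured ones, are attainable. The remaining routine point is that assigning each incident edge to exactly one copy is a valid inclusive (indeed exclusive) split, and any copy that ends up isolated is harmless for bipartiteness.
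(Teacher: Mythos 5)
Your proposal is correct and takes essentially the same route as the paper: in one direction you extract the set $S$ of split vertices and observe that $G-S$ survives as an induced subgraph of the bipartite result, and in the other you split the odd cycle transversal vertices one at a time, routing each edge so that it ends up between oppositely coloured copies, exactly as in the paper's sequential construction. Your converse is in fact slightly more careful than the paper's (you define $S$ as the original vertices with at least two descendants and bound $|S|$ by the length of the splitting sequence, whereas the paper tacitly identifies split vertices with original vertices), but this is a refinement of the same argument rather than a different one.
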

\proof For the forward direction, we assume that $(G,k)$ is a yes-instance of {\sc Bipartite-Inclusive Vertex Splitting}. Let $S$ be the set of vertices of $G$ that are split to obtain a bipartite graph, with $|S| \leq k$. Suppose that
applying inclusive vertex splitting to the vertices in $S$ transforms $G$ into a bipartite graph $G'$.

We claim that $G-S$ is a bipartite graph. Note that $G-S$ is an induced subgraph of $G'$ because splitting the vertices in $S$ does not affect the adjacency between the vertices in $G-S$. Since $G'$ is bipartite and $G-S$ is its induced subgraph, it follows that $G-S$ is also bipartite. So we conclude that $S$ is a bipartite vertex deletion set of size almost $k$. Hence, $(G,k)$ is a yes-instance of {\sc Bipartite-Vertex Deletion}.

For the reverse direction, we assume that $(G,k)$ is a yes-instance of {\sc Bipartite Vertex Deletion}. Suppose $S=\{v_1,v_2, \ldots, v_l\}$ is a bipartite-vertex deletion set of $G$ with $l \leq k$. Suppose $G-S$ is a bipartite graph with bipartition $V_1$ and $V_2$. We treat $S$ as an ordered set.

Now we construct a sequence $G=G_0,G_1,G_2, \ldots,G_l$ where $G_l$ is obtained by splitting of vertex $v_i$ in the graph $G_{i-1}$. Note that the construction of $G_i$ is done inductively. Now, we will see how the vertex splitting is done to get $G_i$ from $G_{i-1}$.

For $1\leq i\leq l$, we split $v_i$ into two vertices $v_{i1}$ and $v_{i2}$ such that 
$$N_{G_{i}}(v_{i1})= (N_{G_{i-1}}(v_i) \cap V_2 )  \cup \left\{v_{j}: v_j \in N-{G_{i-1}}(v_i) \text{ when } j>i \right\} $$
$$N_{G_i}(v_{i2})= (N_{G_{i-1}}(v_i) \cap V_1 )  \cup \{v_{j1}: v_{j1} \in N_{G_{i-1}}(v_i)\} $$

We see that $V_1 \cup \{v_{11},v_{21},\ldots v_{l1}\}$ and $V_2 \cup \{v_{12},v_{22},\ldots v_{l2}\}$ is a bipartition of the graph $G_l$. Hence, $(G,k)$ is a yes-instance of {\sc Bipartite-Inclusive Vertex Splitting}. \qed
\vspace{3mm}

It is easy to see that {\sc Bipartite-Inclusive Vertex Splitting} and  {\sc Bipartite-Exclusive Vertex Splitting} are in NP. Since we know that {\sc Bipartite Vertex Deletion} (popularly known as {\sc Odd Cycle Transversal}) is NP-complete, {\sc Bipartite-Inclusive Vertex Splitting} is NP-hard. We note that in the reverse direction of the proof, all the vertex splittings that were done are exclusive vertex splittings. There is no problem in the forward direction if we allow only exclusive vertex splitting. It is easy to conclude the following results from the above discussion.

\begin{corollary}\label{Res:BIVS}
    {\sc Bipartite-Inclusive Vertex Splitting} is NP-complete.
\end{corollary}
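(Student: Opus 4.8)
The plan is to assemble the corollary directly from the two ingredients already in hand: the equivalence theorem proved immediately above, and the well-known NP-completeness of \textsc{Bipartite-Vertex Deletion} (\textsc{Odd Cycle Transversal}). Since that theorem shows $(G,k)$ is a yes-instance of \textsc{Bipartite-Inclusive Vertex Splitting} exactly when $(G,k)$ is a yes-instance of \textsc{Bipartite-Vertex Deletion}, the identity map on instances $(G,k) \mapsto (G,k)$ serves as a trivially polynomial-time reduction in both directions. I would therefore split the argument into a membership claim and a hardness claim, and then combine them.

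First I would establish membership in NP. A certificate consists of a splitting sequence $G = G_0, G_1, \ldots, G_t$ with $t \le k$, together with a proper $2$-coloring of the final graph $G_t$. Each inclusive split adds exactly one vertex and never increases the edge count, so $G_t$ has at most $|V(G)| + k$ vertices and $|E(G)|$ edges. Restricting to the only nontrivial range $k \le |V(G)|$ (if $k \ge |V(G)|$ the instance is trivially a yes-instance, since every vertex may be split), the certificate has size polynomial in the input; verifying that each $G_{i+1}$ arises from $G_i$ by a legal inclusive split and that the supplied $2$-coloring of $G_t$ is proper can be carried out in polynomial time. Hence the problem lies in NP, in agreement with the remark already made in the text.

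Next I would establish NP-hardness by invoking the equivalence theorem as a reduction from \textsc{Bipartite-Vertex Deletion}. Given an instance $(G,k)$ of \textsc{Bipartite-Vertex Deletion}, output the same instance $(G,k)$ for \textsc{Bipartite-Inclusive Vertex Splitting}; by the theorem this preserves yes- and no-instances, and the transformation is computable in constant time. Because \textsc{Bipartite-Vertex Deletion} is NP-complete, this polynomial-time reduction shows that \textsc{Bipartite-Inclusive Vertex Splitting} is NP-hard.

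Combining the two claims yields NP-completeness. I expect no genuine obstacle here, since all of the structural content resides in the preceding equivalence theorem; the only point requiring any care is the membership argument, where one must confirm the certificate size is polynomial by confining attention to the nontrivial range of the parameter $k$.
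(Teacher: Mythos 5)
Your proposal is correct and follows essentially the same route as the paper: both derive NP-hardness by invoking the preceding equivalence theorem as an identity-map reduction from \textsc{Bipartite-Vertex Deletion} (\textsc{Odd Cycle Transversal}), and both note NP membership via a verifiable splitting-sequence certificate. Your only addition is spelling out the membership details the paper leaves as ``easy to see''---in particular, capping $k$ at $|V(G)|$ so the certificate stays polynomial-size---which is a sound and slightly more careful rendering of the same argument.
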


\begin{corollary}
    Let $G$ be a graph. Then $(G,k)$ is a yes-instance of {\sc Bipartite-Exclusive Vertex Splitting} if and only if $(G,k)$ is a yes-instance of {\sc Bipartite-Vertex Deletion}. 
\end{corollary}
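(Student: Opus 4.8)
The plan is to obtain both directions by piggybacking on the theorem just proved, since the only genuinely new content is the verification that the reverse-direction construction already consists of exclusive splits.

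For the forward direction (a yes-instance of {\sc Bipartite-Exclusive Vertex Splitting} yields a yes-instance of {\sc Bipartite-Vertex Deletion}), the argument from the theorem transfers verbatim. If $S$ with $|S| \le k$ is the set of vertices split---now exclusively---to obtain a bipartite graph $G'$, then $G-S$ is an induced subgraph of $G'$, because splitting vertices of $S$ never alters adjacencies among the vertices of $V \setminus S$. An induced subgraph of a bipartite graph is bipartite, so $S$ is a bipartite vertex deletion set of size at most $k$. Equivalently, since every exclusive split is a special case of an inclusive split, any yes-instance of {\sc Bipartite-Exclusive Vertex Splitting} is a yes-instance of {\sc Bipartite-Inclusive Vertex Splitting}, whence the theorem applies.

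For the reverse direction, I would reuse the exact splitting scheme from the reverse direction of the theorem's proof: given a bipartite vertex deletion set $S = \{v_1, \dots, v_l\}$ with $l \le k$ and a bipartition $V_1, V_2$ of $G-S$, split each $v_i$ into $v_{i1}, v_{i2}$ with the neighborhoods prescribed there. The only point requiring checking is that this splitting is \emph{exclusive}, i.e.\ $N(v_{i1}) \cap N(v_{i2}) = \emptyset$. This holds because the two neighbor sets are drawn from disjoint sources: $N(v_{i1})$ consists of the $V_2$-neighbors of $v_i$ together with still-unsplit later vertices $v_j$ with $j > i$, whereas $N(v_{i2})$ consists of the $V_1$-neighbors of $v_i$ together with descendants $v_{j1}$ of already-split earlier vertices with $j < i$. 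Since $V_1 \cap V_2 = \emptyset$, and the unsplit later vertices are distinct from the split descendants of earlier vertices, the two neighborhoods are disjoint, so each split is exclusive. The same partition $V_1 \cup \{v_{11}, \dots, v_{l1}\}$ and $V_2 \cup \{v_{12}, \dots, v_{l2}\}$ witnesses that the final graph $G_l$ is bipartite, giving a yes-instance of {\sc Bipartite-Exclusive Vertex Splitting}.

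The main---and essentially only---obstacle is this disjointness verification; everything else is a direct transcription of the theorem's proof, which is precisely why the statement can be recorded as a corollary rather than a standalone theorem.
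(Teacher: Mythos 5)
Your proposal is correct and matches the paper's own argument exactly: the paper justifies this corollary by remarking that the splits in the reverse direction of the inclusive-case theorem are already exclusive, and that the forward direction goes through unchanged since every exclusive split is an inclusive split. In fact you go slightly beyond the paper by explicitly verifying the disjointness $N(v_{i1}) \cap N(v_{i2}) = \emptyset$, which the paper asserts without proof.
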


\begin{corollary}\label{Res:BEVS}
    {\sc Bipartite-Exclusive Vertex Splitting} is NP-complete.
\end{corollary}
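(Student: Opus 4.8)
The plan is to mirror the proof of Corollary~\ref{Res:BIVS} and reduce everything to the immediately preceding equivalence. First I would confirm membership in NP: a certificate is simply the sequence of at most $k$ purported exclusive splits, and a verifier can in polynomial time check that each split indeed partitions the neighborhood of the split vertex (i.e.\ the two new neighborhoods are disjoint and cover $N(v)$) and that the resulting graph $G'$ admits a proper $2$-coloring. Both checks are routine, so {\sc Bipartite-Exclusive Vertex Splitting} lies in NP, exactly as already observed in the text.

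Next I would invoke the equivalence asserted in the corollary just above: $(G,k)$ is a yes-instance of {\sc Bipartite-Exclusive Vertex Splitting} if and only if $(G,k)$ is a yes-instance of {\sc Bipartite-Vertex Deletion}. This equivalence is obtained by re-reading the proof of the inclusive equivalence theorem and noting two things. In the forward direction, the only fact used is that $G-S$ is an \emph{induced} subgraph of the split graph $G'$, and this holds regardless of whether the splits are inclusive or exclusive, since splitting vertices of $S$ never alters adjacencies among $V\setminus S$; hence $G-S$ is bipartite and $S$ is a bipartite vertex-deletion set of size at most $k$. In the reverse direction, the construction already partitions each $N_{G_{i-1}}(v_i)$ between the two copies $v_{i1}$ and $v_{i2}$ along the bipartition classes $V_1,V_2$ (with the split-vertex edges routed consistently), so the neighborhoods of the two copies are disjoint and the splits are in fact exclusive. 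Thus the same identity map $(G,k)\mapsto(G,k)$ serves as a polynomial-time reduction in both directions for the exclusive variant.

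Finally, since {\sc Bipartite-Vertex Deletion} (equivalently, {\sc Odd Cycle Transversal}) is NP-complete, the equivalence transfers NP-hardness to {\sc Bipartite-Exclusive Vertex Splitting}; combined with membership in NP this yields NP-completeness. The only step warranting genuine care—the sole potential obstacle—is verifying that the reverse-direction construction from the inclusive proof really produces exclusive splits, i.e.\ that $N_{G_i}(v_{i1})\cap N_{G_i}(v_{i2})=\emptyset$. This reduces to the disjointness of $V_1$ and $V_2$ together with the observation that each edge incident to $v_i$ (whether to $V_1\cup V_2$ or to another split copy) is assigned to exactly one of the two copies, so I would confirm this case analysis to close the argument.

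\qed
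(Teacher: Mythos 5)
Your proposal is correct and follows essentially the same route as the paper: membership in NP, plus the observation that the equivalence theorem with \textsc{Bipartite-Vertex Deletion} carries over to the exclusive variant because the forward direction's induced-subgraph argument is split-type agnostic and the reverse direction's construction already assigns each neighbor of $v_i$ (whether in $V_1$, $V_2$, an unsplit $v_j$ with $j>i$, or a copy $v_{j1}$) to exactly one of $v_{i1},v_{i2}$, hence is exclusive. The careful disjointness check you flag at the end is precisely the point the paper disposes of with its remark that all splits in the reverse direction are exclusive.
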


\section{Conclusion}\label{Sec: Conclusion}
We have studied the classical complexity of {\sc $\mathcal{F}$-Inclusive/Exclusive Vertex Splitting} for different classes $\mathcal{F}$-constellation, cycle graph, linear forest, and bipartite graph. We show that {\sc Constellation-Inclusive Vertex Splitting} and {\sc Constellation-Exclusive Vertex Splitting} are NP-complete on cubic graphs and planar graphs of maximum degree $3$. On the positive side, we show that both {\sc Cycle Graph-Inclusive/Exclusive Vertex Splitting} and {\sc Linear Forest-Inclusive/Exclusive Vertex} are polynomial-time solvable.

We prove the equivalence of {\sc Bipartite-Inclusive/Exclusive Vertex Splitting} and {\sc Bipartite Vertex Deletion}. This guarantees the existence of a kernel and FPT algorithm for {\sc Bipartite-Inclusive/Exclusive Vertex Splitting} as these exist for {\sc Bipartite Vertex Deletion}.

Here, we have studied the classical complexity of {\sc Constellation-Inclusive/Exclusive Vertex Splitting} on cubic graphs and planar graphs of maximum degree $3$. The complexity of this problem on other graph classes remains open. Additionally, the parameterized complexity of this problem remains open, and it is interesting to explore whether this problem admits a polynomial-size kernel. 

\bibliographystyle{plainurl}
\bibliography{References}
\end{document}